\renewcommand{\p@enumi}{theenumi-}
\renewcommand{\@fnsymbol}[1]{\@alph{#1}}
\newcommand{\bbr}{\mathbb{R}}
\newcommand{\bby}{\mathbb{Y}}
\newcommand{\bbx}{\mathbb{X}}
\newcommand{\ci}{\citeasnoun}
\newcommand{\fil}{\mathcal{F}}
\newcommand{\fcal}{\mathcal{F}}
\newcommand{\pcal}{\mathcal{P}}
\newcommand{\bcal}{\mathcal{B}}
\newcommand{\kcal}{\mathcal{K}}
\newcommand{\acal}{\mathcal{A}}
\newcommand{\xcal}{\mathcal{X}}
\newcommand{\ycal}{\mathcal{Y}}
\newcommand{\ucal}{\mathcal{U}}
\newcommand{\el}{\ell}
\newcommand{\ga}{\alpha}
\DeclareMathOperator\sgn{sgn}
\newcounter{modcount}
\newcommand{\modulo}[2]{%
\setcounter{modcount}{#1}\relax
\ifnum\value{modcount}<#2\relax
\else\relax
\addtocounter{modcount}{-#2}\relax
\modulo{\value{modcount}}{#2}\relax
\fi}
\newcommand{\tablepictures}[4][c]{\begin{tabular}[#1]{@{}c@{}}#2\vspace{0.5cm}\\(\alph{#4}) #3\end{tabular}}
\newcounter{gridsearch}
\newcommand{\tabpic}[2]{
    \stepcounter{gridsearch}
    \modulo{\thegridsearch}{2}
    \ifnum\value{modcount}=0
        \tablepictures[t]{#1}{#2}{gridsearch}\\[2.0cm]
    \else
        \tablepictures[t]{#1}{#2}{gridsearch}&~&
    \fi
}
\newtheorem{lemma}{Lemma}[section]
\newtheorem{proposition}[lemma]{Proposition}
\newtheorem{defi}[lemma]{Definition}
\newtheorem{example1}[lemma]{Example}
\newtheorem{ex1}[lemma]{Example}
\newtheorem{rem1}[lemma]{Remark}
\newtheorem{assumption}[lemma]{Assumption}
\newtheorem{alg1}[lemma]{Algorithm}
\newtheorem{me1}[lemma]{Mechanism}
\newenvironment{rem}{\begin{rem1}\rm}{\end{rem1}}
\newenvironment{ex}{\begin{ex1}\rm}{\end{ex1}}
\newenvironment{example}{\begin{example1}\rm}{\end{example1}}
\newcommand{\T}{\mathsf{T}}
\DeclareMathOperator{\Min}{Min}
\DeclareMathOperator*{\argmin}{arg\,min}
\begin{document}

\title{Measures of Systemic Risk}
\author{ Zachary Feinstein\footnote{Zachary Feinstein,  ESE, Washington University, St. Louis, MO 63130, USA, {\tt zfeinstein@ese.wustl.edu}}\\[0.7ex] \textit{Washington University in St. Louis} \and  Birgit Rudloff\footnote{Birgit Rudloff, Wirtschaftsuniversit\"at Wien,
Institute for Statistics and Mathematics,
Geb\"aude D4,
Welthandelsplatz 1, 1020 Wien, Austria, {\tt birgit.rudloff@wu.ac.at}}\\[0.7ex] \textit{Wirtschaftsuniversit\"at Wien}  \and  Stefan Weber\footnote{Stefan Weber, IfMS, Leibniz Universit\"at Hannover, 30167 Hannover, Germany, {\tt sweber@stochastik.uni-hannover.de}}\\[0.7ex] \textit{Leibniz Universit{\"a}t Hannover } }
\date{\today~(Original: February 27, 2015)\footnote{ The authors are grateful to the editors and referees for thoughtful comments and encouragements that led to a greatly improved paper.}}
\maketitle

\begin{abstract}
\emph{Systemic risk} refers to the risk that the financial system is susceptible to failures due to the characteristics of the system itself. The tremendous cost of systemic risk requires the design and implementation of tools for the efficient macroprudential regulation of financial institutions. The current paper proposes a novel approach to measuring systemic risk.
 
Key to our construction is a rigorous derivation of systemic risk measures from the structure of the underlying system and the objectives of a financial regulator.  The suggested \emph{systemic risk measures} express systemic risk in terms of capital endowments of the financial firms.  Their definition requires two ingredients: a \emph{cash flow or value model} that assigns to the capital allocations of the entities in the system a relevant stochastic outcome; and an \emph{acceptability criterion}, i.e.\ a set of random outcomes that are  acceptable to a regulatory authority. Systemic risk is  measured by \emph{the set of allocations of additional capital that lead to acceptable outcomes}. We explain the conceptual framework and the definition of systemic risk measures, provide an algorithm for their computation, and illustrate their application in numerical case studies. 
 
Many systemic risk measures in the literature can be viewed as the minimal amount of capital that is needed to make the system acceptable after aggregating individual risks, hence quantify the costs of a bail-out. In contrast, our approach emphasizes operational systemic risk measures that  include both \emph{ex post} bailout costs as well as \emph{ex ante} capital requirements and may be used to prevent systemic crises. 
 
\end{abstract}\vspace{0.2cm}
\textbf{Key words:}  Systemic risk; monetary risk measures; capital regulation; financial contagion; set-valued risk measures


\section{Introduction}

\emph{Systemic risk} refers to the risk that the financial system is susceptible to failures due to the characteristics of the system itself. The tremendous cost of this type of risk requires the design and implementation of tools for the efficient macroprudential regulation of financial institutions. The current paper proposes a novel and general approach to systemic risk measurement.  The suggested risk statistics summarizes complex features of financial systems in a simple manner and provides at the same time a unified theory that includes many previous contributions. Whenever a suitable model of a financial system is available that allows to mimic certain types of interventions, their impact on systemic risk can quantitatively and qualitatively be analyzed using our measurement tools. In numerical case studies, we also demonstrate that our systemic risk measure which is multi-variate captures properties that cannot be detected by previous univariate risk statistics.  

Key is a rigorous derivation of systemic risk measures from the structure of the underlying system and the objectives of a financial regulator. Our definition of  systemic risk measures requires two ingredients: first, a \emph{cash flow or value model} that assigns to capital allocations of the entities in the system a relevant stochastic outcome. For example, the capital allocation in a complex financial system might be mapped to the stochastic benefits and costs of the society over a fixed time horizon. Mathematically, the cash flow or value model can be described by a non-decreasing random field $Y=(Y_k)_{k\in\bbr^l}$ that provides for each capital allocation $k=(k_i)_{i=1, 2, \dots, l}$ a random variable $Y_k$ that captures the system output. The second ingredient that is required is an acceptability criterion $\acal$; this is the subset of random variables that are considered as acceptable outcomes from the point of view of a regulatory authority. Following the key ideas of monetary risk measures, systemic risk is then simply measured by \emph{the set of allocations of additional capital that lead to acceptable outcomes}; i.e., if $k\in \bbr^l$ is the current capital allocation, then the systemic risk measure $R(Y;k)$ is defined by:
$$R(Y;k) = \left\{ m \in \bbr^l \;  | \;   Y_{k+m} \in \acal \right\}. $$
This is a \emph{set-valued and multivariate risk measure}. We develop an algorithm for the explicit computation of these risk measures and also show how they could be used to derive capital requirements for regulatory purposes in a system of banks. The risk measure is cash-invariant and by characterizing the additional capital that achieves acceptability it provides an \emph{operational perspective} on systemic risk.  In a series of case studies we highlight the features of our systemic risk measure in the context of systemic risk aggregation, see e.g.\  \ci*{chen2013axiomatic}, and  network models as suggested in the seminal paper of \ci{EN01}, see also \ci{CFS05}, \ci{RV11},  \ci{AW_15}, \ci{fei15}, and \ci{hurd15}.

Most systemic risk measures proposed in the literature are of the form
\begin{equation}\label{ins}
\rho(\Lambda(X))=\inf\{k\in\bbr \; | \; \Lambda(X)+k\in \acal\};
\end{equation}
here, an aggregation function $\Lambda:\bbr^n\rightarrow \bbr$ is first applied to an $n$-dimensional risk factor $X$ in order to obtain a univariate random variable $\Lambda(X)$; second, a classical scalar risk measure $\rho$, corresponding to the acceptance set $\acal$, is evaluated at the univariate random variable $\Lambda(X)$.\footnote{  
\ci*{chen2013axiomatic} provide an axiomatic definition of measures of systemic risk with  a decomposition as a concatenation of a scalar risk measure and an aggregation functional  as in \eqref{ins}. Their axiomatic analysis is technically extended in \ci*{kromer2013systemic}, see also \ci*{MB2015}, who work on general probability spaces and consider risk measures that are not necessarily positively homogeneous.
Further examples of the form \eqref{ins} are the Contagion Index of \ci{CMS10} or SystRisk of \ci{brunnermeier2013measuring}.} 
From an economic point of view, the aggregation function $\Lambda$ can be interpreted as the system mapping that captures how the individual components of the system interact and affect other entities, e.g.\ the \emph{society}. Systemic risk measures of the form \eqref{ins}  quantify the financial resources that are needed to move the system into an acceptable state \emph{after} the systemic impact has been observed -- possibly after a crisis has already occurred. They thus represent the costs of a bail-out. 

Capital regulation in contrast provides an instrument that aims at \emph{preventing} crises. This can be captured by systemic risk measures of the form
\begin{equation}\label{sens}
\inf\left\{\sum_{i=1}^n k_i \; | \; \Lambda(X+k)\in \acal\right\},
\end{equation}
for $k=(k_1,...,k_n)\in\bbr^n$. Allocations of additional capital are in this case added to $X$ \emph{before} aggregating risk  components -- not \emph{afterwards} as in \eqref{ins}. Equation \eqref{sens} can typically not be transformed into equation \eqref{ins}, unless $\Lambda$ is trivial. 

This paper provides a general framework that includes both bail-out risk measures \eqref{ins} as well as initial capital requirements \eqref{sens}. In addition, it turns out to be unnecessary to restrict attention to finitely many driving factor and a single aggregation function, i.e.\ $(X, \Lambda)$ as in equation \eqref{ins} and \eqref{sens}. A general cash flow 
or value model is appropriately described by increasing random fields. Moreover, our approach includes other systemic risk measures found in the literature such as CoV@R of \ci{adrian2011covar}.

The contributions of this paper are as follows:
 \begin{itemize}
 \item The suggested multivariate approach to systemic risk is novel. It constructs systemic risk measures from the essential ingredients of systemic risk: the available capital input $k$, a cash flow or value model of the relevant stochastic outcomes -- captured by a random field  $Y$, and an acceptability criterion $\acal$. 
 \item We provide a framework into which -- in many cases -- the most important previous contributions of the literature, \ci*{chen2013axiomatic}, \ci*{brunnermeier2013measuring}, and  \ci{adrian2011covar}, can be embedded. Our construction is, however, also more general than the literature in the sense that it allows for very complex feedback mechanisms, as we will illustrate in Section~\ref{Sec:casestudies}. The risk measures could also be applied in sophisticated simulation models, see e.g.\  \ci{Borovkova13}.
 \item  The suggested systemic risk measures, despite the fact that they are set-valued, can be computed explicitly. We construct a simple algorithm and demonstrate in numerical case studies that this procedure can easily be implemented. 
 \item In Section~\ref{Sec:Orthant} we introduce a new concept called \emph{efficient cash-invariant allocation rules (EARs)}  which allows to derive specific capital allocations from systemic risk measures that could be used as capital requirements within a banking system. While the suggested systemic risk measures are set-valued, EARs are usually single-valued. They can be characterized as minimizers of the weighted cost of capital on a global level; for this purpose, prices of regulatory capital  are assigned to all entities in the financial system. They provide a general framework for many attribution rules that are suggested in the literature.
 \item We also show in our numerical case studies that systemic risk measures provide significantly more information about systemic risk than previous approaches or EARs. This highlights fundamental challenges related to univariate systemic risk measures and attribution rules. 
  \end{itemize}

\paragraph{Literature.} 
There is an extensive literature on the axiomatic theory of monetary risk measures that was initiated by \ci{ADEH99}, see e.g.\ \ci{foellmer-schied3rd} and \ci{FW15} for surveys. Set-valued risk measures are e.g.\ considered in  \ci{jt04}, \ci{hamel2010duality},  \ci{hamel2011set},  and \ci{cascos}. 

In the current paper, the cash flow or value model that captures the relevant stochastic outcome is described by a non-decreasing random field; this allows, in particular, a non-linear dependence of the final position on capital. Such a situation is typical in the presence of systemic risk or illiquidity, see e.g.\  \ci{Liquidity}. Random fields substantially generalize previous approaches. 

Our systematic approach to systemic risk is new to the literature. At the same time,  previous constructions can -- in many situations --  be interpreted as special cases of our systemic risk measures: If the systemic risk measure of  \ci*{chen2013axiomatic} is coherent, or if the risk measure is formulated within a conceptual framework as described in \ci*{ADKM09}, it can be embedded into our setting.  Furthermore, the construction of \ci{brunnermeier2013measuring} can be interpreted as a special case of our approach. Also other concepts including  conditional systemic risk measures like  CoV@R of \ci{adrian2011covar} fit into our  methodology.  Furthermore, on the basis of our framework many examples from the literature can be modified in such a way that they can be used as capital requirements that incorporate their impact on the system.

Another strand of literature considers market-based indices of systemic distress that measure the expected capital shortfall of a bank conditional on systemic events, as suggested in \ci{Acharya} and \ci{SRISK}. Unlike  \ci{adrian2011covar} these papers focus on  conditional expectations instead of conditional V@Rs that they estimate from data. Their results can e.g.\ be used in order to rank systemically risky firms.

Independently from our paper, a closely related systemic risk measure was developed by \ci{BFFM2015}. Their working paper was available about one month after we published our working paper on arXiv: \href{http://arxiv.org/pdf/1502.07961v1.pdf}{http://arxiv.org/pdf/1502.07961v1.pdf}. In contrast to us, \ci*{BFFM2015} focus on univariate systemic risk measures and study risk measurement and attribution separately.  They do also allow for a random allocation of risk to individual entities.  If the risk allocation is deterministic, their approach can be embedded into ours as a special case.  A similar methodology is also applied in \ci{ACDP2015}. The follow-up paper \ci{Ararat2016} investigates robust representations of systemic risk measures.

\paragraph{Outline.} The paper is structured as follows. In Section \ref{Sec:SysRisk} we explain our conceptual framework and the definition of the suggested systemic risk measures. Section \ref{Sec:Orthant} introduces  efficient cash-invariant allocation rules.  Section \ref{Sec:grid} develops algorithms for the computation of systemic risk measures. We illustrate the implementation of the procedure in numerical case studies, first, in the setting of \ci*{chen2013axiomatic} and \ci*{kromer2013systemic} and, second, in the network models of \ci{EN01} and \ci*{CFS05}. Section \ref{Sec:casestudies} reviews the structure of these models and describes the numerical results. The proofs are collected in the appendix.

\section{Measures of systemic risk}\label{Sec:SysRisk}

{ Let $(\Omega,\fcal,P)$ be a probability space. For families of random vectors and variables on  $(\Omega,\fcal,P)$, we use the following notation: $L^0(\Omega; \bbr^d)$ denotes the family of $d$-dimensional random vectors, $L^p(\Omega; \bbr^d) \subseteq L^0(\Omega; \bbr^d)$  and $L^\infty(\Omega; \bbr^d) \subseteq L^0(\Omega; \bbr^d)$  the subspaces of $p$-integrable and bounded random vectors, respectively. Additionally, we fix a topological vector space $\xcal \subseteq L^0(\Omega; \bbr)$ that contains the constants, for example   $\xcal = L^0(\Omega; \bbr)$, or $\xcal = L^{\infty}(\Omega; \bbr)$. }

We consider a single period model for the evolution of the economy. A definition of systemic risk requires two ingredients: the specification of a \emph{non-decreasing random field} and a notion of \emph{acceptability}. 
The index set of the random field will be interpreted as the amount of the eligible assets, or the \emph{capital allocation}. The random field  provides a random cash flow or value model that assigns a random output to each capital allocation. This output is either acceptable or not. Risk is then measured as the collection of capital allocations that need to be added in order to obtain acceptable outputs. 

\subsection{Eligible assets, and a cash flow or value model}\label{sec:cfvm}

We denote by 
$\ycal$
a space of non-decreasing random fields $Y: \bbr^l \to \xcal$; a random field $Y$ is non-decreasing if $k\leq m$, $k,m \in \bbr^l$, implies $Y_k \leq Y_m$. The random field provides a description of a cash flow or value model  (CVM).  Suppose that a collection of $l$ entities, e.g., financial firms or groups of firms is endowed with an amount of capital $k\in \bbr^l$, then $Y_k$ signifies a random system output at capital level $k$ such as the resulting total value of the financial system to the real economy. Monotonicity means that a larger capital allocation leads to a higher random value of the system to the economy. 

We are now going to discuss a few concrete examples of  CVMs  that can be used to capture systemic risk. 

\begin{example}\label{network_ex} Consider an interconnected economy of financial institutions $$N = \{1, 2, \dots, n\}.$$ These will typically be interpreted as banks, but could also include pension funds, hedge funds, and other financial entities. 
\begin{enumerate}
\item {\bf Aggregation mechanism insensitive to capital levels:} The setting of \ci*{chen2013axiomatic} is a special case of the proposed framework.  Letting $X \in L^0(\Omega; \bbr^n)$ be the future wealths of the agents in the financial sector, total wealth is aggregated across firms in the system by some increasing function $$\Lambda: \bbr^n \to \bbr .$$  In addition, each firm $i$ might be required to provide a capital amount $k_i$ to the system. In this case $l=n$ and the final random output  is then given by  $$Y_k := \Lambda(X) + \sum_{i = 1}^n k_i, \quad k \in \bbr^n.$$  
Observe that this approach aggregates the random endowments of the firms prior to considering capital endowments. These are simply added to the aggregated amount.

\item {\bf Aggregation mechanism sensitive to capital levels:} The setting of \ci*{chen2013axiomatic} described in Example (i) can be modified  by setting $$ Y_k := \Lambda(X+k) , \quad    k \in \bbr^n.$$ In this case, the capital allocation changes the argument of the aggregation function. This implies that the aggregation mechanism is sensitive to capital levels. In particular, feedback from capital levels to the final outputs are captured. 

\item {\bf A financial network with market clearing:} Assume the wealth of each agent randomly evolves between time $t=0$ and $t=1$ due to their primary business activities.  This might be captured, as e.g.\ in \ci*{CFS05}, by two random vectors $X, S \in L^0(\Omega; \bbr^n)$.  We assume that these vectors are the only sources of randomness in the economy.  The components of the vector $X$ are interpreted as the random amount of cash at time $t=1$ that each of the $n$ firms owns; the components of $S$  signify the number of shares that the firms hold in an illiquid asset. Analogous to the seminal paper \ci{EN01}, \ci*{CFS05} assume that the firms have bilateral obligations within the financial system. When the obligations are cleared after time $1$, some holdings of the illiquid asset must be liquidated; the resulting decrease in price has an adverse impact on the wealth of the financial institutions in the system. 

We assume that the financial system does not exist in isolation, but is connected to a further entity, the \emph{society}. Formally, this is achieved by simply augmenting the financial network by another node -- $0$ -- representing  the society.  Suppose that the capital endowments of the financial firms are $(k_i)_{i=1,2,\dots,n}$. The network clearing mechanism of \ci*{CFS05} provides the value of the equity of the entities or, if entities default, the amount of losses they cause to the system. These quantities can be captured for each node by a deterministic function
$$e_i: \bbr^n \times \bbr^n  \to \bbr, \quad i=0,1,2, \dots, n .  $$

If one focuses on the wealth of society, a suitable   CVM   is provided by the random field
$$Y_k := e_0(X+k;S), \quad k \in \bbr^n.$$
This can be seen as a special case of example (ii) with aggregation function  $$\Lambda(\cdot) := e_0(\cdot;S) .  $$
This example can easily be adopted to other variations of the model of  \ci{EN01}, including -- besides fire sales (cf.\ \ci{CFS05}, \ci{fei15}, \ci{hurd15}) -- e.g.\  bankruptcy costs (cf.\ \ci{RV11}) or cross-holdings (cf.\ \ci{E07}); see also \ci{AW_15} for a comprehensive extension. 

\item {\bf Groups:} While the previous example assumes that the capital of the financial institutions can be chosen without any restriction, constraints on groups of firms can easily be implemented. Suppose, for example, that the firms consist of three groups; in each group all firms hold the same amount of capital, i.e.\ $$ \left( 
k_1, k_1, \dots, k_1,k_2, k_2, \dots, k_2,k_3, k_3, \dots, k_3
\right)^\T .$$
This can be captured by a monotonously increasing function 
$g: \bbr^l \to \bbr^n, $
setting 
$$Y_k=e_0(g(k)+X;S),$$
where $l=3\leq n$ and $g(k) = ( 
k_1, \dots, k_1,k_2, \dots, k_2,k_3, \dots, k_3
)^\T$ in the specific example. This formalism can easily be adopted to other monotone constraints or the setting in Examples (i) and (ii).
\item {\bf Conditional distributions:} 
\ci{adrian2011covar} propose  a notion called CoV@R which is based on conditional quantiles (or, more generally, on conditional risk measures) in order to assess the contributions of market participants to overall systemic risk. Letting the market consist of $n$ financial firms, the basic idea of CoV@R is to describe the financial system by a random variable $X^{\rm system}$ and the individual firms by random variables $X_1, \ldots, X_n$. The random variables represent some economically meaningful quantity such as equity.  Given a confidence level $q$, the ${\rm CoV@R}_q^{j | C(X_i) = c}$ of entity $j\in\{1,\dots, n, {\rm system}\}$ is the V@R at level $q$ of $X_j$ conditional on institution $i$ being in a given financial condition represented by $C(X_i)=c \in \bbr$ for some measurable function $C: \bbr\to \bbr$:
$$
P\left(X_j  \leq {\rm CoV@R}_q^{j| C(X_i)=c } \big| C(X_i) =c\right)=q.
$$ 
In this setting, a corresponding   CVM   $$Y ^{j | C(X_i)=c}$$ may be chosen by requiring that for $k\in \bbr$ (i.e., $l=1$), the distribution of 
$ Y_k ^{j | C(X_i)=c}$ is equal to the conditional distribution of $k + X_j$ given $C(X_i)=c$. Of course, an analogous construction is possible when $C$ does not only depend on the state of entity $i$, but of multiple entities.  Moreover,  CVMs like  $Y ^{j | C(X_i)=c}$  can also provide the basis for other conditional risk measures beyond the V@R.  
\item  {\bf Increasing processes:}

Finally, we would like to consider  a simple example where $Y=(Y_k)$ cannot be written in terms of  deterministic transformations of $Y_0$.  As before, we assume that $X\in L^0(\Omega; \bbr^n)$ is the random future  wealth of the agents of the financial system that is aggregated across firms  in the system by some increasing function $\Lambda: \bbr^n \to \bbr$. However, we modify the mechanism how additional capital affects the wealths of the financial entities.

Generalizing (ii), assume that capital $k\in \bbr^n$ does not necessarily represent a cash allocation, but refers to investments with random payoffs. To this end, we consider a kernel $\mu$ from $(\Omega, \fil)$ to $(\bbr_+, \bcal)$ where $\bcal$ denotes the Borel-sets. Any investment of capital is represented by a set $B\in \bcal$ and leads to the random payoff $\mu (\cdot, B)$. 

We may now specify how the wealths of the banks depends on capital. The capital investment of bank $i$ is modeled by a mapping  $\mathfrak{B}^i: \bbr \to \bcal$ that is decreasing on $(-\infty, 0]$ and increasing on $[0, \infty)$. The random wealth of bank $i$ with capital $k_i$ at time $1$ is then given by:
$$Z^i_{k_i} = X_i +  \sgn (k_i) \cdot \mu(\cdot, \mathfrak{B}^i(k_i)) .    $$
Setting $Z_k = (Z^1_{k_1}, Z^2_{k_2} , \dots , Z^n_{k_n})^T$, the CVM is defined by
$$Y_k = \Lambda (Z_k) . $$

\end{enumerate}
\end{example}

\subsection{Systemic risk measures}

In the current section, we define systemic risk measures for CVMs. To begin with, we recall an important elementary result from the theory of monetary risk measures, see e.g.\ \ci{foellmer-schied3rd}: \emph{Any scalar monetary risk measure can  be represented as a capital requirement.} This statement describes the following correspondence: For any risk measure, the family of positions with non-positive risk is the acceptance set of the risk measure. Elements of the acceptance set are called acceptable. The risk measurement of any position can be recovered as a capital requirement, i.e.\ the smallest monetary amount that needs to be added to the position to make it acceptable. Due to this simple fact, the axiomatic theory of risk measures could also be based on acceptance sets instead of risk measures. This observation provides a useful starting point for our construction of systemic risk measures.

{ To be precise,} let  $\acal \subseteq \xcal$ be the acceptance set of a static scalar monetary risk measure with the following defining properties: 
\begin{enumerate}
\item $\inf\{m \in \bbr\; | \; m \in \acal \} > -\infty$,
\item $M_1 \in \acal$, $M_2 \in \xcal$, $M_2 \geq M_1$ $\Rightarrow$ $M_2\in \acal$,
\item $\acal$ is closed in $\xcal$.
\end{enumerate}
Property (i) describes that not any deterministic monetary amount is acceptable, excluding trivial cases. Property (ii) states that positions that dominate acceptable positions are again acceptable. Property (iii) is a weak condition that holds, if a risk measure is lower semicontinuous which is in most cases either required or automatically satisfied, see e.g.\  \ci{foellmer-schied3rd}, \ci{kainarueschendorf}, and \ci{chli09}.  

The scalar risk measure corresponding to $\acal$ is the corresponding capital requirement
$$\rho(M) = \inf \{ m \in \bbr\; | \; m + M \in \acal\} , \quad M\in \xcal,$$
i.e.\ the smallest monetary amount that needs to be added to a position such that it becomes acceptable. 

For the definition of systemic risk measures, suppose now that the acceptance set $\acal$ is the acceptance criterion of the regulator, i.e.\ a random outcome is acceptable, if it lies in $\acal$. Given a 
 CVM  $Y\in \ycal$, an amount of capital $k\in \bbr^l$ will thus be considered to be acceptable, if 
$$ Y_k \in \acal .$$
We will identify systemic risk with the collection { of the allocations of additional capital that lead to acceptability}. 
\begin{defi}
\label{def:risk}
Let $\pcal(\bbr^l;\bbr^l_+) := \{B \subseteq \bbr^l\; | \; B = B + \bbr^l_+\}$ be the collection of upper sets with ordering cone $\bbr^l_+$. We call the function
 $$R: \ycal\times\bbr^l \to \pcal(\bbr^l;\bbr^l_+) $$  a \emph{systemic risk measure}, if for some acceptance set $\acal \subseteq \xcal$ of a scalar monetary risk measure:
\begin{align}
\label{eq:def-risk}
R(Y;k) &= \{m\in \bbr^l\; | \; Y_{k+m} \in \acal\}.
\end{align}
\end{defi}
\begin{rem}
The risk measure $R$ is a set-valued function that assigns to the   CVM   $Y$ and a current level of capital $k$ all those capital vectors that make the output acceptable.  The monotonicity of the   CVM   $Y$ and the properties of the acceptance set $\acal$ imply that with $m\in  R(Y;k)$ also the   set $m+\bbr^l_+ \subseteq R(Y;k)$,  i.e., $R$ maps into $\pcal(\bbr^l;\bbr^l_+) $.
\end{rem}

Systemic risk measures possess standard properties like cash-invariance and monotonicity that are known from scalar monetary risk measures. 
\begin{lemma}\label{Lem:ci-m} Let $R$ be a systemic risk measure as introduced in Definition~\ref{def:risk}. Then $R$ has the following properties:
\begin{enumerate}
\item {\bf Cash-invariance}:
$$ R(Y;k)+m = R(Y;k-m)  \quad(\forall k,m \in \bbr^l, \;\forall Y \in \ycal).$$
\item {\bf Monotonicity}: 
 $$ (\forall Y,Z \in \ycal\mbox{ s.t. }\forall k \in \bbr^l: \;Y_k \geq Z_k) \; \Rightarrow  \; (\forall k \in \bbr^l: \; R(Y;k) \supseteq R(Z;k)).$$
\item {\bf Closed values}:  Suppose that $\bbr^l \to \xcal, k \mapsto Y_k$ is continuous. Then $R(Y;k) $ is a closed subset of $\bbr^l$.
\end{enumerate}
\end{lemma}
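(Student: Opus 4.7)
The plan is to verify each of the three properties directly from the definition $R(Y;k)=\{m\in\bbr^l\mid Y_{k+m}\in\acal\}$, using (a) the monotonicity of the random field $Y$, and (b) the three axioms of an acceptance set for a scalar monetary risk measure.

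For part (i), \textbf{cash-invariance}, I would simply perform a change of variables in the defining set. If $n\in R(Y;k)$, then $Y_{k+n}\in\acal$; setting $n':=n+m$ gives $Y_{(k-m)+n'}=Y_{k+n}\in\acal$, so $n'\in R(Y;k-m)$. This shows $R(Y;k)+m\subseteq R(Y;k-m)$, and the reverse inclusion is symmetric. No topological or order-theoretic input is needed beyond the definition itself.

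For part (ii), \textbf{monotonicity in $Y$}, the argument reduces to property (ii) of the acceptance set (closedness under domination from above). Let $Y,Z\in\ycal$ with $Y_k\geq Z_k$ for all $k\in\bbr^l$, and fix $k\in\bbr^l$. If $m\in R(Z;k)$, then $Z_{k+m}\in\acal$, and since $Y_{k+m}\geq Z_{k+m}$ with $Y_{k+m}\in\xcal$, we obtain $Y_{k+m}\in\acal$, i.e.\ $m\in R(Y;k)$.

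For part (iii), \textbf{closed values}, under the hypothesis that the map $\bbr^l\to\xcal,\; k\mapsto Y_k$ is continuous, the composition $\phi\colon m\mapsto Y_{k+m}$ is a continuous map from $\bbr^l$ into $\xcal$. Since $\acal$ is closed in $\xcal$ by property (iii) of the acceptance set, the set $R(Y;k)=\phi^{-1}(\acal)$ is the preimage of a closed set under a continuous map, hence closed in $\bbr^l$.

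All three arguments are essentially one-line applications of the definitions together with the corresponding property of $\acal$ or of $Y$. The only point at which one should pause is in (iii), where one must be careful that continuity is taken with respect to the chosen topological vector space structure of $\xcal$; but this is exactly what the hypothesis $k\mapsto Y_k$ continuous provides, so there is no real obstacle. The proof therefore amounts to bookkeeping rather than any genuinely delicate step.
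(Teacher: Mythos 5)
Your proof is correct and follows exactly the route the paper sketches: (i) by a direct change of variables in the defining set, (ii) by invoking property (ii) of the acceptance set $\acal$ together with the pointwise domination $Y_k \geq Z_k$, and (iii) by writing $R(Y;k)$ as the preimage of the closed set $\acal$ under the continuous map $m \mapsto Y_{k+m}$. The paper's own proof is just a one-line sketch of the same three observations, so your write-up is a faithful (and more explicit) version of it.
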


Property (i) refers to the fact that if current capital is reduced by $m$, the required additional capital must be increased by $m$. Property (ii) states that if $Y$ is always larger than $Z$, any additional capital vector that makes the output acceptable for the   CVM   $Z$ and a current capital level $k$ will also do so for the ``better''   CVM   $Y$. Property (iii) is a regularity assumption that will be used later.

Due to its cash-invariance, it is in many cases sufficient to analyze  $R(Y;0)$, since $R(Y;k) + k = R(Y;0)$. In order to simplify the notation, we will frequently write
$R(Y)$ instead of $R(Y;0)$.

\begin{rem}\label{Rem:lawinvariant} A systemic risk measure can also be distribution-based, also called law-invariant, see e.g.\ \ci{FW15}.
Assume that $\acal$ is the acceptance set of a distribution-based scalar risk measure. Let $Y, Z \in\ycal$.  If for all fixed $k \in \bbr^l$, $Y_k$ and $Z_k$ have the same distribution, then { $R(Y) = R(Z).$}
\end{rem}

We will now investigate the issue of convexity. In the classical setting, this is related to the question of whether or not risk measures quantify diversification appropriately. Before answering this question, we consider another elementary property, namely the convexity of the risk measurements  $R(Y)$.  

\begin{lemma}\label{Lem:conv-set}
Suppose that $\acal$ is convex and that $\bbr^l \to \xcal, k \mapsto Y_k$ is concave.  Then $R(Y)$ is a convex subset of $\bbr^l$. 
\end{lemma}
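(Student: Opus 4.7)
The plan is to verify convexity of $R(Y)=R(Y;0)=\{m\in\bbr^l : Y_m\in\acal\}$ directly from the definition by combining three ingredients: the concavity of $k\mapsto Y_k$, the convexity of $\acal$, and the monotonicity property (ii) of acceptance sets.

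First I would fix two points $m_1,m_2\in R(Y)$ and a scalar $\lambda\in[0,1]$, and set $m_\lambda:=\lambda m_1+(1-\lambda)m_2$. The goal is to show $m_\lambda\in R(Y)$, i.e.\ $Y_{m_\lambda}\in\acal$. By the hypothesis that $k\mapsto Y_k$ is concave as a map into $\xcal$, we have
\[
Y_{m_\lambda}\;\geq\;\lambda Y_{m_1}+(1-\lambda)Y_{m_2}.
\]
Since $m_1,m_2\in R(Y)$, both $Y_{m_1}$ and $Y_{m_2}$ belong to $\acal$, and convexity of $\acal$ then gives $\lambda Y_{m_1}+(1-\lambda)Y_{m_2}\in\acal$. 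Finally, property (ii) of the acceptance set (upward closedness with respect to the a.s.\ order on $\xcal$) upgrades this to $Y_{m_\lambda}\in\acal$, which is exactly $m_\lambda\in R(Y)$.

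The argument is essentially one line once the ingredients are lined up; the only subtle point is to be explicit that ``concave'' here is meant in the pointwise/a.s.\ sense into the ordered space $\xcal$, so that the inequality $Y_{m_\lambda}\geq\lambda Y_{m_1}+(1-\lambda)Y_{m_2}$ is precisely of the form needed to apply the monotonicity axiom (ii) of $\acal$. There is no real obstacle beyond this bookkeeping, since property (i) of $\acal$ is not needed and property (iii) (closedness) plays no role in convexity.
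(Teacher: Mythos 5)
Your proof is correct and follows exactly the same route as the paper's: concavity of $k\mapsto Y_k$ gives $Y_{m_\lambda}\geq \lambda Y_{m_1}+(1-\lambda)Y_{m_2}$, convexity of $\acal$ places the right-hand side in $\acal$, and the monotonicity axiom (ii) of acceptance sets lifts this to $Y_{m_\lambda}\in\acal$. The only cosmetic difference is that the paper argues for general $R(Y;k)$ while you specialize to $k=0$, which suffices for the statement as given.
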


Let us now turn to the issue of diversification.  For this purpose, it will be convenient to assume that the family $\ycal$  is enumerated by some index set $I$, i.e.\ $\ycal = (Y ^i)_{i\in I}$. We will not assume that $i\mapsto Y^i$ is injective, so $Y ^i = Y ^j$ for $i\neq j$ is allowed.

For $i, j \in I$ and $\alpha\in [0,1]$ we set
$$C_\alpha(i, j)_k:= \alpha Y^i_k+(1-\alpha )Y^j_k, \quad k\in \bbr^l .$$
This is  the diversification of the final outputs of the model.  Financial agents divide their exposure between $Y^i $ and $Y ^j$ and receive, given their capital allocation $k$, the convex combination of the original outputs that are governed by $Y^i$ and $Y^j$. From now on we will assume that $i, j\in I$ implies $C_\alpha (i,j) \in \ycal$ for any $\alpha \in [0,1]$. 

In the case of systemic risk, diversification might, however,  also concern the inputs of the model.  Consider the situation of Example~\ref{network_ex}(ii), setting 
$ Y^X_k := \Lambda (X+k), \; k\in \bbr^n, ${~for~}$ X\in  I := L^\infty(\Omega; \bbr^n) .  $
Suppose that diversification refers to a convex combination of the underlying risk factors $X$ and $X'$, i.e.
$$D_\alpha(X,X' )_k = \Lambda (k + \alpha X + (1-\alpha) X'), \quad k\in \bbr^n, \;\alpha\in [0,1] .$$
Unless $\Lambda$ is linear, we typically have $D_\alpha(X,X')_k \neq C_\alpha(X, X')_k$.   This is the reason why  we call any  mapping
$$ D : \left\{
\begin{array}{ccl}
I \times I \times  [0,1] & \to & \ycal  \\
(i, j, \alpha) & \mapsto  & D_\alpha(i,j)
\end{array}
\right.$$
a \emph{diversification operator}. 

\begin{lemma}\label{Propo:rm} Let 
$R: \ycal \times \bbr^l \to \pcal(\bbr^l;\bbr^l_+)  $
be a systemic risk measure as introduced in Definition~\ref{def:risk} with $\ycal = (Y^i)_{i\in I}$, and fix $\alpha\in [0,1]$ and $i, j \in I$. 
If $\acal$ is convex and 
\begin{equation}\label{mixture}
	 \quad D_\alpha(i, j)_k\geq C_\alpha(i, j)_k  \quad (\forall k \in \bbr^l),
\end{equation}
then 
\begin{equation}\label{eq:quasi-c}R(D_\alpha(i, j))  \supseteq R(Y^i)  \cap R(Y^ j ) . \end{equation}
\end{lemma}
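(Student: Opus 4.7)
My plan is to verify the set inclusion \eqref{eq:quasi-c} by a direct pointwise argument. I would take an arbitrary $m \in R(Y^i) \cap R(Y^j)$ and show that $m \in R(D_\alpha(i,j))$, using only the three defining properties of acceptance sets, the convexity hypothesis on $\acal$, and the dominance assumption \eqref{mixture}.

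The sequence of steps is as follows. First, unfold the definition of $R$ at $k=0$ (as permitted by the convention $R(Y) = R(Y;0)$ introduced after Lemma~\ref{Lem:ci-m}): the assumption $m \in R(Y^i) \cap R(Y^j)$ is equivalent to $Y^i_m \in \acal$ and $Y^j_m \in \acal$. Second, invoke convexity of $\acal$ applied to these two elements of $\xcal$ with weights $\alpha$ and $1-\alpha$ to conclude
\[
C_\alpha(i,j)_m \;=\; \alpha Y^i_m + (1-\alpha) Y^j_m \;\in\; \acal.
\]
Third, use the hypothesis \eqref{mixture} at $k=m$, which gives $D_\alpha(i,j)_m \geq C_\alpha(i,j)_m$, together with the monotonicity property (ii) of acceptance sets ($M_1 \in \acal$, $M_2 \geq M_1$ imply $M_2 \in \acal$) to conclude $D_\alpha(i,j)_m \in \acal$. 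Unfolding the definition of $R$ once more yields $m \in R(D_\alpha(i,j))$, which establishes \eqref{eq:quasi-c}.

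There is no real obstacle here; the argument is a one-line chase once the correct ingredients are identified. The only conceptual point worth emphasizing in the write-up is that convexity of $\acal$ is used on the aggregated outputs $Y^i_m, Y^j_m \in \xcal$ (not on the risk factors), and monotonicity of $\acal$ in $\xcal$ is what allows us to pass from the convex combination $C_\alpha(i,j)_m$ to the possibly larger $D_\alpha(i,j)_m$. This step explains why the one-sided domination \eqref{mixture} is sufficient and why equality of $D_\alpha$ and $C_\alpha$ is not needed.
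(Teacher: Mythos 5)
Your proof is correct and is essentially the paper's own argument, which is stated in one line as a consequence of monotonicity (Lemma~\ref{Lem:ci-m}(ii)), convexity of $\acal$, and \eqref{mixture}; you have simply unpacked the monotonicity step into a pointwise application of property (ii) of the acceptance set, which is the same mechanism. No gaps.
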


If property~\eqref{eq:quasi-c} holds for any $\alpha\in [0,1]$ and at any $(i, j) \in I^2$, we call $R$  \emph{$D$-quasi-convex}. $D$-quasi-convexity has the interpretation that the diversification mechanism $D$ does not increase capital requirements:
If $k \in \bbr^l$ is an acceptable capital allocation for the two  CVMs  $Y^i$ and $Y^j$, i.e.\ $Y^i_{k} \in \acal$ and $Y^j_{k} \in \acal$, then the diversified system response $D_\alpha(i,j)$ would also be acceptable for capital level $k \in \bbr^l$ -- i.e.\ $D_\alpha(i,j)_{k} \in \acal$. Equation \eqref{mixture} provides a simple criterion, namely that the diversification mechanism $D$ dominates diversification $C$ of outputs.

\begin{example}{~} \label{ex:acceptance}
We will now briefly discuss a few examples. More details, including illustrative numerical case studies, can be found in Section~\ref{Sec:casestudies}.

The definition of a systemic risk measure relies on a  CVM,  see Section \ref{sec:cfvm} for specific examples. The second ingredient is an acceptance criterion formalized by an acceptance set $\acal$. Let us first recall some standard examples, see  \ci{foellmer-schied3rd}, \ci{BT07}, and \ci{FW15}, before we discuss examples of systemic risk measures.  For simplicity, we fix $\xcal = L^\infty (\Omega; \bbr )$, although all risk measures could be defined on larger spaces. 

\begin{enumerate}
\item {\bf Acceptance criteria:}
\begin{enumerate}
\item \emph{Average value at risk}: Given $\lambda \in (0,1)$, the \emph{value at risk} at level $\lambda \in (0,1)$ is defined by 
$ {\rm V@R}_\lambda (M) := \inf \{m \in \bbr\; | \; P[ M + m < 0] \leq \lambda \} ,$ $  M\in \xcal.$ The \emph{average value at risk} 
is the average of the value at risk below the level $\lambda$:
$${\rm AV@R}_\lambda(M) :=  \frac 1 \lambda \int _0 ^\lambda {\rm V@R}_\gamma (M) d \gamma, $$
and constitutes a coherent risk measure. Its acceptance set is
$$\acal = \{M \; | \; {\rm AV@R}_\lambda(M) \leq 0  \} = \{ M \; | \; \exists r \in \bbr : E[(r-M )^+] \leq r \lambda \},  $$ see Lemma 4.51 in \ci{foellmer-schied3rd}.
\item \emph{Utility-based shortfall risk}: Letting $\el: \bbr \to \bbr$ be an increasing, non-constant, convex loss function and $z$ be a threshold level in the interior of the domain of $\el$, we define the convex risk measure
$${\rm UBSR}(M) : = \inf \{  m\in\bbr \; | \;   E[ \el ( - M -m )] \leq z   \}.$$
The \emph{utility-based shortfall risk} (UBSR) of a position $M$ equals the smallest monetary amount $m$ such that the disutility $E[ \el ( - M -m )]$ is at most $z$. It can be characterized as the unique root $m^*$ of the equation
$E[ \el ( - M -m^* )] - z = 0. $
The acceptance set of UBSR can be represented by
$$\acal = \{ M \; | \;   E[\el (-M)] \leq z   \}. $$
\item \emph{Optimized certainty equivalent}:
Let $u: \bbr \to [- \infty , \infty)$ be a concave and non-decreasing utility function satisfying $u(0) = 0 $, and $u(x) < x \; \forall x$. The \emph{optimized certainty equivalent} (OCE) is defined by the map ${\rm OCE}: \xcal \to \bbr$ with
$$ {\rm OCE} (M) = \sup_{\eta \in \bbr}  \{   \eta + E[u(M- \eta)] \} . $$
The negative of an OCE defines a convex risk measure
$\rho (M) = - {\rm OCE} (M) . $
Its acceptance set can be represented by
$$ \acal = \{  M \; | \; \exists \eta \in \bbr :    \eta + E[u(M- \eta)] \geq 0   \} .$$
The risk measure ${\rm AV@R} _\lambda$, $\lambda \in (0,1)$, can be obtained as a special case for $ u(t) = \frac 1 \lambda \cdot t$, $t\leq 0$, and $u(t) =0 $, $t>0$.
\end{enumerate}
\item {\bf Systemic risk measures:} We  fix the acceptance set $\acal \subseteq \xcal$ of any scalar risk measure on $\xcal$, e.g.\ any of the acceptance sets described above. 

\begin{enumerate}
\item Let $\Lambda: \bbr^n \to \bbr$ be an aggregation function as described in Example~\ref{network_ex}. We assume that $\Lambda$ is continuous and set  $I= \xcal.$ 

We consider now the case of an aggregation mechanism sensitive to capital levels: 
$$ Y^X_k =  \Lambda( k + X),\quad k \in \bbr^n , \; X\in I.  $$
The function $k\mapsto Y^X_k$ is concave for any $X$, if $\Lambda$ is concave, and in this case the systemic risk measure $R$ has convex values. 

The diversification operator could naturally be defined by 
$$D_\alpha(X,X')_k  = \Lambda (k + \alpha X + (1-\alpha) X'), \quad k \in \bbr^n, \; \alpha \in [0,1], \; X, X' \in I . $$
Suppose that $\Lambda$ is concave, then:
\begin{eqnarray*}
D_\alpha(X,X')_k & = & \Lambda (\alpha (k + X) + (1-\alpha) (k+X')) \\
& \geq &\alpha \Lambda  (k + X)+ (1-\alpha) \Lambda(k+X')\\
&= & C_\alpha(X,X')_k.
\end{eqnarray*}
Lemma \ref{Propo:rm} implies that $R$ is $D$-quasi-convex  if $\acal$ is convex. 

If the aggregation mechanism is insensitive to capital levels, one obtains an analogous result by setting  
\begin{eqnarray*} 
Y^X_k &  =   & \sum_{i = 1}^n k_i + \Lambda(X),\quad\quad\quad\quad\quad\quad \quad \quad k \in \bbr^n , \; X\in I, \\
D_\alpha(X,X')_k &  = &  \sum_{i = 1}^n k_i + \Lambda ( \alpha X + (1-\alpha) X'), \quad\quad k \in \bbr^n, \; \alpha \in [0,1], \; X, X' \in I . 
\end{eqnarray*}

\item  As described in  Example~\ref{network_ex}, a financial network  with market clearing can be interpreted as a special case of the aggregation mechanism that is sensitive to capital levels. The analysis above thus also applies in this case, implying that the concavity of $e_0$ in its first argument  and the convexity of $\acal$  leads to a $D$-quasi-convex systemic risk measure. 
\item 
Consider the situation in Example \ref{network_ex} (vi). We denote the family of kernels $\mu$ from $(\Omega, \fil)$ to $(\bbr_+, \bcal)$ such that $\mu(\cdot, \bbr) \in L^\infty (\Omega; \bbr)$ by $\kcal$. We set $I=L^\infty(\Omega, \bbr^l) \times \kcal$,
$$ Y_k^{(X, \mu)} = \Lambda[ \left\{X_m + \sgn(k_m)\cdot \mu (\cdot,  \mathfrak{B}^m(k_m))\right\}_{m=1,2,\dots, l}  ], \quad (X,\mu)\in I,$$
and define the family of corresponding CVMs by $\ycal = \{ Y^{(X, \mu)}: (X, \mu) \in I  \}$.  
A possible choice of a diversification operator is
$$D_\alpha ((X, \mu), (X', \mu'))_k =  \Lambda[ \left\{\alpha X_m  + (1-\alpha) X'_m+ \sgn(k_m)\cdot (\alpha \mu + (1-\alpha) \mu ' )(\cdot,  \mathfrak{B}^m(k_m))\right\}_{m=1,2,\dots, l}  ]  $$
with $(X, \mu), (X', \mu') \in I$.
Apparently, if $\Lambda$ is concave, then $$D_\alpha ((X, \mu), (X', \mu'))_k  \geq C_\alpha ((X, \mu), (X', \mu'))_k $$ for all $k \in \bbr^l$. Lemma \ref{Propo:rm} implies that $R$ is $D$-quasi-convex,  if $\acal$ is convex.

\end{enumerate}
\item {\bf Conditional systemic risk measures:} As a final example, consider the situation relevant to CoV@R as introduced by \ci{adrian2011covar}.  The setting was explained in Example~\ref{network_ex}(v).   Fix an event $A\in \fcal$ on which we condition. The corresponding conditional  probability measure on $(\Omega, \fil)$ is $P(\cdot | A)$. Setting $\xcal = L^\infty (\Omega, \fil, P(\cdot | A) )$, let $\acal \subseteq \xcal$ be the acceptance set of a distribution-based scalar risk measure $\rho$. This implies that $\rho(X) = \rho(X')$ whenever $X$ and $X'$ have the same distribution under $P(\cdot|A)$. With $I:= L^\infty (\Omega, \fil, P )$, we enumerate $\ycal$ by
$$Y^X_k := k + X , \quad k \in \bbr, \;  X\in I .$$

The special case of CoV@R corresponds to choosing $A$ as in Example~\ref{network_ex}(v) and $\acal\subseteq \xcal$ as the acceptance set of V@R with respect to $P(\cdot|A)$ at level $q$. In this case, one can, of course, not hope that the resulting systemic risk measure has convexity properties, since V@R is not convex. But the conditional setting of \ci{adrian2011covar} can be generalized to other risk measures. 

Suppose now that $\acal\subseteq \xcal$ is convex. The function $k\mapsto Y^X_k$ is concave for any $X$; hence, the corresponding systemic risk measure $R$ has convex values. A natural choice for a diversification operator is 
$$D_\alpha(X,X')_k  = k + \alpha X + (1-\alpha) X'  = C_\alpha(X,X')_k, \quad k \in \bbr, \; \alpha \in [0,1], \; X, X' \in I . $$
 Lemma \ref{Propo:rm}  implies that $R$ is $D$-quasi-convex.
\end{enumerate}
\end{example}

\begin{rem}
As illustrated by various examples, both cash-invariance and diversification may refer to both inputs or outputs of the system. This is not the case for classical scalar monetary risk measures in which these concepts have a unambiguous interpretation. Moreover,  as illustrated in Example~\ref{network_ex}(vi), capital might not simply shift endowments but refer to investments with random payoffs. Our notions of CVMs and systemic risk measures provide a unified framework that includes all these cases. 
\end{rem}

\section{Efficient cash-invariant allocation rules}\label{Sec:Orthant}

Systemic risk is an inherently multi-valued type of risk. In the previous section, we suggested measuring it by the collection of allocations of additional capital that lead to acceptable outcomes. An acceptability criterion can, for example, be chosen by a regulatory authority on the basis of macroprudential objectives. Multi-valued systemic risk measures are then an appropriate tool for the characterization of the underlying downside risk. They also provide a basis for the evaluation and the design of regulatory policies that might go far beyond capital regulation. The goal of regulatory instruments must be to achieve a situation in which actual capital levels are acceptable.

The practical implementation of multi-valued systemic risk measures might, however, be quite challenging. We will thus describe allocation rules that are both efficient and cash-invariant. These will include many risk attribution rules from the literature and provide a general perspective on this issue. As we will demonstrate in Section \ref{Sec:casestudies}, such single-valued allocation rules might in some cases attribute inappropriate risk charges to individual entities. This is an important problem that we point out in the current paper. Since our methodology encompasses many specific constructions from the literature, this issue is not a deficiency of our approach, but a fundamental challenge that we highlight here. 

Any efficient rule for the choice of an allocation of additional capital picks an element $k^*$ of the boundary of $R(Y;k)$. This implies that
\begin{equation}\label{eq:def-orth}
k^* +\bbr^l_+ \subseteq  R(Y;k),
\end{equation} as illustrated in Figure~\ref{fig:orthant_0}. The coordinates of $k^*$ can thus be interpreted as minimal capital requirements for the individual financial firms that lead to an acceptable system. These can be communicated more easily than a systemic risk measure $ R(Y;k)$.  Moreover, the individual choices of acceptable capital levels of the entities within $k^* + \bbr^l_+$ do not cause any externalities to the acceptable capital levels in $k^* + \bbr^l_+$ of the other entities.  In contrast, if the system is required to hold capital allocations in $R(Y;k)$, the feasible region of a single entity $i$ is the section of $R(Y;k)$ at the actual capital $m_{-i} \in \bbr^{l-1}$ of the other entities.
If the other entities modify their actual capital levels inside $R(Y;k)$, the feasible region of entity $i$ will generally change. Note, however, that capital requirements $k^* +\bbr^l_+$ are stricter than systemic risk measures $R(Y;k)$. In this sense, they impose tighter restriction than necessary on the banking system. 

Note, however, that the mentioned externalities are not artificially created  by systemic risk measures. We are dealing with systems, and within these systems externalities are present, since the final outcome is produced by the interaction of the entities of the system. Our systemic risk measure  characterizes exactly those allocations of additional capital that lead to acceptability. At this level, interactions and externalities are \emph{visible}.

\newrgbcolor{db}{0 0 0.5}
\newrgbcolor{lb}{0.3 0.5 1.0}
\definecolor{lg}{gray}{0.7}
\begin{figure}
\centering
\psset{unit=0.4cm}
\begin{pspicture}(0,0)(18,12)
\pspolygon[linewidth=0.0mm,linecolor=white,fillstyle=vlines,hatchcolor=lg,hatchsep=0.4cm,hatchangle=45](.8333,12)(1,10)(2,5)(5,2)(10,1)(18,.5555)(18,12)
\pspolygon[linewidth=0.0mm,linecolor=white,fillstyle=vlines,hatchcolor=lb,hatchsep=0.4cm,hatchangle=-45](3,12)(3,3.3333)(18,3.3333)(18,12)
\pscurve[linewidth=0.6mm,linecolor=black]{-}(.8333,12)(1,10)(2,5)(3,3.3333)(5,2)(10,1)(18,.5555)
\pscircle[linewidth=0.6mm,linecolor=db](3,3.3333){0.2}
\psline[linewidth=0.6mm,linecolor=db]{-}(3,3.3333)(18,3.3333)
\psline[linewidth=0.6mm,linecolor=db]{-}(3,3.3333)(3,12)
\psline[linewidth=0.6mm,linecolor=black]{->}(0,0)(18.3,0)
\psline[linewidth=0.6mm,linecolor=black]{->}(0,0)(0,12.3)
\end{pspicture}
\caption{Illustration of a minimal point $k^*$ of an upper set with the orthant $k^* + \bbr^2_+$ in blue.}
\label{fig:orthant_0}
\end{figure}
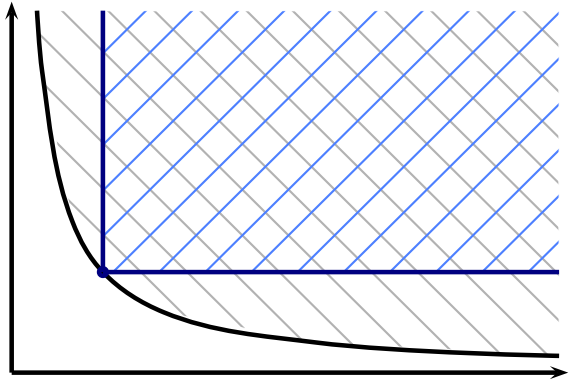

Let us now focus on the construction of efficient cash-invariant allocation rules.  We will always suppose that following { technical} condition holds.  
\begin{assumption}
\label{ass_sec3}
The systemic risk measurement { $R(Y)$} is convex, closed, and non-trivial, i.e.\ { $R(Y) \not\in\{\emptyset,\bbr^l\}$} for every $Y \in \ycal$.
\end{assumption}
\begin{rem} We already provided sufficient conditions for convex and closed values of systemic risk measures in Lemma~\ref{Lem:ci-m} (iii) and Lemma~\ref{Lem:conv-set} .
\end{rem}

In order to be efficient, capital allocations $k^*$ should be as small as possible. In other words, the family  $k^* +\bbr^l_+$ should be a  maximal subset of $R(Y;k)$, as shown in Figure~\ref{fig:orthant_0}. This property is formally slightly stronger than assuming that $k^*$ lies on the boundary of $R(Y;k)$; in fact, it is equivalent to requiring that $ k^*$  is a minimal point of $R(Y;k)$, i.e.
$$ k^* \in \Min R(Y;k):=\{m\in \bbr^l\; | \; (m-\bbr^l_+) \cap R(Y;k)=\{m\}\} .$$
In the following, we define a suitable class of mappings  $k^*$ on $\ycal \times \bbr^l$, called \emph{efficient cash-invariant allocation rules (EAR)}, that choose $k^* \in \Min R(Y;k)$. As we will show, EAR can be characterized as  minimizers of the total weighted cost among all acceptable allocations, if we assign a price of regulatory capital to each entity, see Lemma~\ref{thm:scalarization} below. While typical examples are usually single-valued,   efficient cash-invariant allocation rules  may admit multiple values in non-generic circumstances. We will include a description of these cases in Remark~\ref{rem:scalar}. The mapping $k^*$ is thus set-valued in order to formally allow for these non-generic situations.

\begin{defi}\label{orthant}
 Let $\pcal(\bbr^l)$ be the power set of $\bbr^l$. 
A mapping $k^*: \ycal \times \bbr^l \to \pcal(\bbr^l)$ is called an  \emph{efficient cash-invariant allocation rule (EAR)}  associated with a systemic risk measure $R$, if the following properties are satisfied:
\begin{enumerate}
\item  {\bf Minimal values}: $$k^* (Y;k) \subseteq \Min R(Y;k) \quad(\forall k \in \bbr^l,\;\forall Y \in \ycal).$$
\item  {\bf Convex values}: $$k^1,k^2
\in k^*(Y;k) \;\Rightarrow \; \ga k^1+(1-\ga )k^2 \in k^*(Y;k) \quad (\forall \ga \in [0,1], \; \forall k \in \bbr^l, \; \forall Y\in \ycal).$$ 
\item {\bf Cash-invariance}:
$$k^*(Y;k) + m = k^*(Y;k-m)  \quad(\forall k,m \in \bbr^l, \; \forall Y \in \ycal).$$
\end{enumerate}
\end{defi}

\begin{rem}
Property (i) in Definition~\ref{orthant} was already explained. Property (ii)  guarantees that $k^*(Y;k)$
is single-valued for $Y\in \ycal$ and $k \in \bbr^l$, if $\Min R(Y;k)$ does not contain any line-segments. Property (iii)
of $k^*$ can be interpreted in the same way as the corresponding property of $R$.
\end{rem}

Due to its cash-invariance, it is often sufficient to investigate $k^* (Y;0)$ which we will frequently simply denote by $k^*(Y)$.

\begin{defi}
An  EAR   $k^*: \ycal \times \bbr^l \to \pcal(\bbr^l)$ may possess the following additional properties:
\begin{enumerate}
\setcounter{enumi}{3}
\item {\bf Law-invariance}: 
  Let $Y, Z \in\ycal$.  If for all fixed $k \in \bbr^l$, $Y_k$ and $Z_k$ have the same distribution, then 
 $k^*(Y) = k^*(Z) .$ 
\item {\bf $D$-quasi-convex}: 
Let $\ycal = (Y^i)_{i\in I}$ as in Lemma~\ref{Propo:rm}, and fix $i, j \in I$.  We say that $k^*$ is \emph{$D$-quasi-convex at $(i,j)$} if 
\begin{equation}\label{eq:ortho-quasi}k^*(D_\alpha(i, j)) + \bbr^l_+  \supseteq [k^*(Y^i) + \bbr^l_+]  \cap [k^*(Y^ j) + \bbr^l_+] \end{equation}
for any $\alpha\in [0,1]$. If $k^*$ is $D$-quasi-convex  at any $(i, j) \in I^2$, we call $k^*$ simply \emph{$D$-quasi-convex}.
\end{enumerate}
\end{defi}

\begin{rem}
 Properties (iv) and (v)
of $k^*$ can be interpreted in the same way as the corresponding properties of $R$.
It is, however, possible that the systemic risk measure satisfies this property while a corresponding  EAR  does not. 
\end{rem}

\begin{rem}
If the  EAR  $k^*$ is single-valued, i.e.\  $k^*(Y) \in \bbr^l$ for every $Y \in \ycal$,  then property (v) can be simplified to the more familiar quasi-convexity condition:
\[ k^*(D_\alpha(i,j)) \leq k^*(Y^i) \vee k^*(Y^j),\]
where $k_1 \vee k_2$ denotes the pointwise maximum of $k_1,k_2\in\bbr^l$.
\end{rem}

\begin{proposition}\label{Propo:ortho-conv}
Suppose that Assumption~\ref{ass_sec3} holds, and let $\ycal = (Y^i)_{i\in I}$ as in Lemma~\ref{Propo:rm}.  If $\acal$ is convex 
and $$D_\alpha(i, j)_k\geq C_\alpha(i, j)_k  \quad (\forall k \in \bbr^l),$$ 
for $(i,j) \in I^2$, then there exists some  EAR  $k^*$ which is $D$-quasi-convex at $(i,j)$.  
\end{proposition}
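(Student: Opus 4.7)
The plan is to construct an EAR $k^*$ by hand, specifying its values at $Y^i$, $Y^j$, and the diversified CVMs $D_\alpha(i,j)$ so that the $D$-quasi-convex inclusion~\eqref{eq:ortho-quasi} is built in by design. The first step is to invoke Lemma~\ref{Propo:rm}: the convexity of $\acal$ together with $D_\alpha(i,j)_k \geq C_\alpha(i,j)_k$ gives
\[
R(D_\alpha(i,j)) \supseteq R(Y^i) \cap R(Y^j) \qquad (\forall \alpha \in [0,1]).
\]
Under Assumption~\ref{ass_sec3} each $R(Y)$ is closed, convex, non-trivial, and (from its definition) an upper set. Hence for any $m_0 \in R(Y)$ the lower section $R(Y) \cap (m_0 - \bbr^l_+)$ is closed and convex, and under standard regularity it is compact; minimizing $\sum_i m_i$ on it then yields a minimal element of $R(Y)$ dominated by $m_0$, because any $m' \in R(Y)$ with $m' \leq m \leq m_0$ also lies in the lower section.

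Next, I would pick $m^i \in \Min R(Y^i)$ and $m^j \in \Min R(Y^j)$ and set $\bar a := m^i \vee m^j$ (componentwise maximum). Since $R(Y^i), R(Y^j)$ are upper sets and $\bar a \geq m^i, m^j$, one has $\bar a \in R(Y^i) \cap R(Y^j) \subseteq R(D_\alpha(i,j))$ for every $\alpha$. Reapplying the above construction to $R(D_\alpha(i,j))$ with $m_0 = \bar a$ selects some $m^\alpha \in \Min R(D_\alpha(i,j))$ with $m^\alpha \leq \bar a$; in the overlapping cases $D_1(i,j) = Y^i$ and $D_0(i,j) = Y^j$ one sets $m^1 = m^i$ and $m^0 = m^j$ to keep the definition consistent.

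I would then define $k^*(Y^i; 0) := \{m^i\}$, $k^*(Y^j; 0) := \{m^j\}$, $k^*(D_\alpha(i,j); 0) := \{m^\alpha\}$ for $\alpha \in [0,1]$, pick any singleton in $\Min R(Y;0)$ for every other $Y \in \ycal$, and extend by cash-invariance $k^*(Y;k) := k^*(Y;0) - k$. Verification of the EAR axioms of Definition~\ref{orthant} is then immediate: the values lie in $\Min R$ (translation preserves minimality), singletons are convex, and cash-invariance is baked in. For the $D$-quasi-convex inclusion at $(i,j)$ one computes
\[
[k^*(Y^i) + \bbr^l_+] \cap [k^*(Y^j) + \bbr^l_+] = (m^i + \bbr^l_+) \cap (m^j + \bbr^l_+) = \bar a + \bbr^l_+ \subseteq m^\alpha + \bbr^l_+ = k^*(D_\alpha(i,j)) + \bbr^l_+,
\]
using $m^\alpha \leq \bar a$.

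The hard part is ensuring $\Min R(Y)$ is non-empty and externally stable for the relevant CVMs, i.e.\ the compactness of the lower sections glossed above; this reduces to the recession cone of $R(Y)$ meeting $-\bbr^l_+$ only at the origin, and is where further structure beyond the closed-convex-non-trivial hypotheses of Assumption~\ref{ass_sec3} must enter. In the remaining pathological cases where $\Min R(Y) = \emptyset$, setting $k^*(Y) = \emptyset$ trivially satisfies the EAR axioms and renders \eqref{eq:ortho-quasi} vacuous on that component.
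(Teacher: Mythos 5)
Your proposal is correct and follows essentially the same route as the paper's proof: both invoke Lemma~\ref{Propo:rm} to place the componentwise maximum of minimal points of $R(Y^i)$ and $R(Y^j)$ inside $R(D_\alpha(i,j))$, then select a minimal point of $R(D_\alpha(i,j))$ dominated by that maximum and define the EAR pointwise, extending by cash-invariance (the paper phrases this as modifying an arbitrary single-valued EAR $\bar k^*$, and, like you, simply asserts the nonemptiness of $[\bar k^*(Y^i;0)\vee\bar k^*(Y^j;0)-\bbr^l_+]\cap\Min R(D_\alpha(i,j);0)$ that you flag as the delicate point). One small caveat: your final fallback of setting $k^*(Y)=\emptyset$ when $\Min R(Y)=\emptyset$ does not actually make \eqref{eq:ortho-quasi} vacuous if only $\Min R(D_\alpha(i,j))$ is empty while $\Min R(Y^i)$ and $\Min R(Y^j)$ are not, since then the left-hand side is empty but the right-hand side is not; this degenerate case is, however, also left unaddressed by the paper.
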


The following lemma provides a characterization and a method for the construction of  EARs.  The EARs  will be characterized as acceptable allocations with minimal weighted costs.  Vectors of prices of regulatory capital are chosen from the dual of the \emph{recession cone} of the systemic risk measure $R(Y)$.  The recession cone of a systemic risk measure is defined by  
\[{\rm recc}R(Y) := \{m \in \bbr^l \; | \; \forall \alpha > 0: \alpha m + R(Y) \subseteq R(Y)\} \supseteq \bbr^l_+ \quad (Y \in \ycal),\] 
which are  the directions within the risk measure in which an arbitrary amount of capital can be added. The positive dual cone is given by 
\[{\rm recc}R(Y)^+ = \{w \in \bbr^l \; | \; \forall m \in {\rm recc}R(Y): w^\T m \geq 0\} \subseteq \bbr^l_+ \quad (Y \in \ycal).\] 
These are the vectors of prices so that by adding capital along the recession cone the total price of regulatory capital does not decrease.  
\begin{lemma}\label{thm:scalarization} 
Suppose that Assumption~\ref{ass_sec3} holds and  ${\rm recc}R(Y) \cap -\bbr^l_+ = \{0\}$  for every $Y \in \ycal$. Let  
$R: \ycal \times \bbr^l \to \pcal(\bbr^l;\bbr^l_+) $ be a systemic risk measure as introduced in Definition~\ref{def:risk}. For $w: \ycal \to \bbr^l_{++} := (0, \infty)^l$ such that  $w(Y) \in {\rm recc} R(Y)^+$,  the set-valued mapping  
\begin{equation}
\label{argminlemma}
\hat k(Y)=
\argmin \left\{\sum_{i = 1}^l w(Y)_i  m_i\; | \; m \in R(Y)\right\} \quad (Y \in \ycal)
\end{equation}
defines an   EAR. All EARs   $k^*$ as defined in Definition~\ref{orthant} are included in   EARs   $\hat k$ of form \eqref{argminlemma}, i.e.\   $k^*(Y)\subseteq \hat k(Y)$ for all $Y \in \ycal$. 
\end{lemma}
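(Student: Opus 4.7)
The plan is to prove the two assertions of the lemma separately: first that any scalarization $\hat k$ of the form \eqref{argminlemma} satisfies the three axioms of Definition~\ref{orthant}, and second that an arbitrary EAR $k^*$ is contained in some such $\hat k$. For the first assertion I would begin by verifying that $\hat k(Y)$ is non-empty: because $w(Y)\in{\rm recc}\,R(Y)^+$ the linear functional $m\mapsto w(Y)^{\T} m$ is bounded below on $R(Y)$, and the combination of $w(Y)\in\bbr^l_{++}$ with ${\rm recc}\,R(Y)\cap(-\bbr^l_+)=\{0\}$ forces $w(Y)^{\T} d>0$ for every non-zero recession direction of $R(Y)$ (outside the lineality space), so sub-level sets intersected with $R(Y)$ are bounded and a minimizer exists by closedness. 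Property (i), $\hat k(Y)\subseteq\Min R(Y)$, is then immediate: if $m\in\hat k(Y)$ and $m'\in R(Y)$ with $m'\le m$, $m'\ne m$, strict positivity of $w(Y)$ yields $w(Y)^{\T} m'<w(Y)^{\T} m$, contradicting optimality. Property (ii) holds because the $\argmin$ of a linear function over a convex set is convex, and property (iii) follows from the cash-invariance $R(Y;k)=R(Y;0)-k$ which immediately propagates to $\hat k(Y;k)=\hat k(Y;0)-k$.

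For the second assertion I would fix an EAR $k^*$ and construct, for each $Y$, a weight $w(Y)\in\bbr^l_{++}\cap{\rm recc}\,R(Y)^+$ such that $k^*(Y)\subseteq\hat k(Y)$ by a supporting-hyperplane argument. Choosing any $k\in k^*(Y)\subseteq\Min R(Y)$, the translate $C:=R(Y)-k$ is closed, convex, contains $0$, satisfies $C\cap(-\bbr^l_+)=\{0\}$, and has recession cone ${\rm recc}\,R(Y)$. Separating $C$ from the open convex cone $-\bbr^l_{++}$ gives a normal $w\in\bbr^l_+\setminus\{0\}$ with $w^{\T}c\ge 0$ for all $c\in C$, i.e.\ $k\in\argmin\{w^{\T}m:m\in R(Y)\}$, and moreover $w$ must lie in ${\rm recc}\,R(Y)^+$ since otherwise the objective would be unbounded below along some recession direction. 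To upgrade $w$ to strictly positive entries, I plan to show that ${\rm recc}\,R(Y)^+$ itself meets $\bbr^l_{++}$: since $\bbr^l_+\subseteq{\rm recc}\,R(Y)$, the dual cone ${\rm recc}\,R(Y)^+$ lies in $\bbr^l_+$, and if it were contained in some coordinate hyperplane $\{w_i=0\}$ then bipolarity would force $-e_i\in{\rm recc}\,R(Y)$, contradicting ${\rm recc}\,R(Y)\cap(-\bbr^l_+)=\{0\}$. Taking a convex combination of the separating normal with an element of ${\rm recc}\,R(Y)^+\cap\bbr^l_{++}$ preserves the supporting property and yields a strictly positive $w(Y)$. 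Finally, because $k^*(Y)$ is convex and sits inside $\Min R(Y)$, and because $\argmin\{w(Y)^{\T}m:m\in R(Y)\}$ is a face of $R(Y)$, choosing the supporting normal at a relative-interior point of $k^*(Y)$ makes the whole set $k^*(Y)$ lie in the corresponding $\hat k(Y)$.

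The main technical obstacle is the strict-positivity upgrade in the second part: a naive Hahn-Banach separation of $C$ from $-\bbr^l_{++}$ only produces $w\in\bbr^l_+\setminus\{0\}$, and passing to $w\in\bbr^l_{++}$ genuinely requires the recession-cone assumption. The bipolar / coordinate-hyperplane argument sketched above is the key step that both delivers a strictly positive element of ${\rm recc}\,R(Y)^+$ and justifies that this element can be blended into the supporting normal without destroying the optimality of $k$ for the scalarized objective.
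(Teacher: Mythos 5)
Your treatment of the first assertion is essentially the paper's: minimality of the argmin follows from strict positivity of $w(Y)$, convexity of values from linearity of the objective over the convex set $R(Y)$, and cash-invariance from the $k$-independence of $w$. The one slip there is the non-emptiness claim: $w(Y)\in\bbr^l_{++}\cap{\rm recc}\,R(Y)^+$ does \emph{not} force $w(Y)^{\T}d>0$ for every non-zero recession direction outside the lineality space, and the infimum need not be attained (take $R(Y)=\{(x,y)\,:\,y\ge -x+e^{-x}\}$, whose recession cone is $\{(u,v):u\ge0,\,v\ge-u\}$, and $w=(1,1)$: the infimum is $0$ and is not attained). Since Definition~\ref{orthant} does not require $\hat k(Y)\neq\emptyset$, this is harmless, but the boundedness argument as stated is wrong.

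The genuine gap is in the second assertion, exactly at the step you yourself flag as the crux. Blending the separating normal $w_0\in\bbr^l_+\setminus\{0\}$ with some $w_1\in{\rm recc}\,R(Y)^+\cap\bbr^l_{++}$ does \emph{not} preserve the supporting property: membership of $w_1$ in ${\rm recc}\,R(Y)^+$ only controls $w_1^{\T}$ along recession directions, not on all of $R(Y)-k$, so $(1-\lambda)w_0+\lambda w_1$ can fail to attain its minimum over $R(Y)$ at $k$ for every $\lambda>0$. In fact the statement you need at that point --- that every $k\in\Min R(Y)$ minimizes some strictly positive linear functional over $R(Y)$ --- fails for general sets satisfying Assumption~\ref{ass_sec3} and ${\rm recc}\,R(Y)\cap(-\bbr^l_+)=\{0\}$: for $R(Y)=\{m\in\bbr^2:\|m\|\le1\}+\bbr^2_+$ (which arises from a deterministic concave CVM with $\acal=[0,\infty)$, and has ${\rm recc}\,R(Y)=\bbr^2_+$), the point $(-1,0)$ is minimal but its only supporting normal is a multiple of $(1,0)$, so no perturbation into $\bbr^2_{++}$ supports $R(Y)$ there; in general only a density statement in the spirit of Arrow--Barankin--Blackwell is available. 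The paper routes this step through Proposition~\ref{Propo:min}, whose ``only if'' direction is exactly the claim in question and is obtained there by invoking a separation lemma of Hamel; your example-free blending argument cannot substitute for it, and the example above indicates that this direction of Proposition~\ref{Propo:min} itself (and hence the inclusion $k^*(Y)\subseteq\hat k(Y)$ for \emph{arbitrary} EARs) warrants scrutiny. Your closing face/relative-interior argument for fitting the whole convex set $k^*(Y)$ into a single $\hat k(Y)$ is fine, but only once a common strictly positive normal is known to exist, which is precisely what is not established.
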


As mentioned previously, the vector $w(Y) \in \bbr^l_{++}$ can be interpreted as a vector of prices of the capital requirements of the firms that is chosen by the regulator. For firm $i$ the component $w(Y)_i$ signifies the price that the regulator assigns to one unit of regulatory capital of firm $i$.  EARs  are thus those capital allocations that minimize the total   price-weighted   cost of the capital allocation. 

\begin{rem}\label{rem:scalar}
For $Y\in \ycal$, the value  $\hat k(Y)$  contains more than one element, if and only if  $\Min R(Y)$   contains a line segment that is orthogonal to $w(Y)$.
\end{rem}

\begin{rem} \label{Rem:w}
The weighting vectors $w: \ycal \to \bbr^l_{++}$ have the shift-invariance property $w(Y) = w(Y_{k+\cdot})$ for any $Y \in \ycal$ and $k \in \bbr^l$.  If we apply this to Example~\ref{network_ex}(ii), setting $Y_k^X := \Lambda(X+k)$, $X \in I := L^{\infty}(\Omega; \bbr^n)$ and $k \in \bbr^n$, we obtain that  $w(Y^X)$  depends only on its index, the random vector $X$.
\end{rem}

\begin{rem}
\begin{enumerate}
\item For $Y \in \ycal$ uniqueness of  $\hat k(Y)$   is generic in the following sense, if  $R(Y)$  is not a half-space (implying that  ${\rm recc} R(Y)^+$  is not of dimension $1$) and satisfies the assumptions of Lemma~\ref{thm:scalarization}:  

\noindent Suppose that the direction $w/\sqrt{\sum_{i=1}^l  w_i^2}$ of $w \in \bbr^l_{++}$ is chosen according to a uniform probability measure $\ucal$ on the sphere intersected with $\bbr^l_{++}$.  Then $\argmin \left\{\sum_{i = 1}^l w_i  m_i\; | \; m \in  R(Y)  \right\}$ has cardinality strictly larger than 1 for $w\in {\rm recc}   R(Y)^+$   with $\ucal$-probability 0.
\item If the set of minimal points   $\Min R(Y)$   for  $Y\in \ycal$ does not contain any line segments,  uniqueness is guaranteed.
\end{enumerate}
\end{rem}

While, from a mathematical point of view,  EARs  in Lemma~\ref{thm:scalarization} may be computed for price vectors $w$ depending on the random shock, for practical applications it might sometimes be desirable to choose $w$ simply as constant. In this case, we obtain that the   EAR  can be computed as:  
\begin{equation}\label{eq:orthant}
\hat k (Y)=
\argmin \left\{\sum_{i = 1}^l w_i  m_i\; | \; m \in R(Y)\right\}  \quad\quad\quad (Y \in \ycal)
\end{equation}
for $w\in  \bbr^l_{++} \cap \bigcap_{Y\in  \ycal }{\rm recc} R(Y)^+ $.

\definecolor{lg}{gray}{0.7}
\begin{figure}
\centering
\psset{unit=0.4cm}
\begin{pspicture}(0,0)(18,12)
\pspolygon[linewidth=0.0mm,linecolor=white,fillstyle=vlines,hatchcolor=lg,hatchsep=0.4cm,hatchangle=45](2.1,12.0)(2.4,9.8)(4.0,6.4)(6.6,4.8)(9.2,4.1)(13.8,3.5)(18.0,3.4)(18.0,12.0)
\pspolygon[linewidth=0.0mm,linecolor=white,fillstyle=vlines,hatchcolor=lg,hatchsep=0.4cm,hatchangle=-45](5.4,12.0)(5.5,7.6)(6.3,5.4)(8.3,2.7)(11.0,1.1)(13.5,0.5)(16.8,0.0)(18.0,0.0)(18.0,12.0)
\pscurve[linewidth=0.6mm,linecolor=black]{-}(2.1,12.0)(2.4,9.8)(4.0,6.4)(6.6,4.8)(9.2,4.1)(13.8,3.5)(18.0,3.4)
\pscurve[linewidth=0.6mm,linecolor=black]{-}(5.4,12.0)(5.5,7.6)(6.3,5.4)(8.3,2.7)(11.0,1.1)(13.5,0.4)(16.8,0.0)(18.0,0.0)
\psline[linewidth=0.1mm,linecolor=blue,linestyle=dashed]{-}(0.0,10.35)(10.35,0.0)
\pscircle[linewidth=0.6mm,linecolor=blue](4.2,6.2){0.2}
\psline[linewidth=0.6mm,linecolor=blue,linestyle=dotted](0.0,2.0)(10.0,12.0)
\psline[linewidth=0.1mm,linecolor=red,linestyle=dashed]{-}(0.0,10.95)(10.95,0.0)
\pscircle[linewidth=0.6mm,linecolor=red](8.1,2.9){0.2}
\psline[linewidth=0.6mm,linecolor=red,linestyle=dotted](5.2,0.0)(17.2,12.0)
\psline[linewidth=0.6mm,linecolor=black]{->}(0,0)(18.3,0)
\psline[linewidth=0.6mm,linecolor=black]{->}(0,0)(0,12.3)
\end{pspicture}
\caption{Illustration of  EARs  for two systemic risk measurements.  The tangent and normal lines at the   EAR  $k^*$ are drawn as dashed and dotted lines, respectively.}
\label{fig:orthant}
\end{figure}
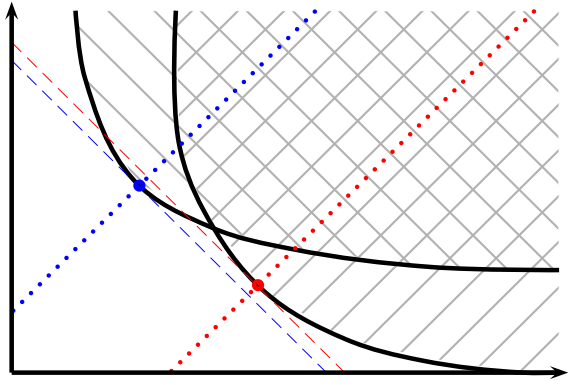

Figure~\ref{fig:orthant} depicts systemic risk measurements together with  EARs  corresponding to a price vector $w = (1,1)^\T$.  The dotted lines indicate the normal vectors, the dashed lines the corresponding tangent vectors. 

\begin{ex}  Let us discuss some examples for the choice of a regulatory price vector $w: \ycal  \to \bbr^l_{++}$. 
\begin{enumerate} 
\item {\bf Fixed price vectors}.
\begin{enumerate}
\item  If $w= (1, \dots, 1)^\T \in \bbr^n$ in Examples \ref{network_ex}(i)-(iii), the regulatory cost of capital of all $n$ firms is equal. In this case, the regulator minimizes the total capital in equation \eqref{eq:orthant} when computing an  EAR.   If the firms are grouped into $l$ entities, as in Example~\ref{network_ex}(iv), with $n_i$ firms in group $i$, then the weighting vector with the same interpretation is given by $w = (n_1,n_2,\dots,n_l)^\T$.
\item  Alternatively, within the context of \ci{EN01} or \ci*{CFS05} introduced in Example \ref{network_ex}(iii), the regulatory price vector $w$ might be defined in terms of ``leverage'', i.e.\ $w_i = \frac 1 {| e_i(0;0 )|}$. The quantity $e_i(0;0)$ signifies the losses to the financial system due to a default of institution $i$ in the case of no recovery.  While the systemic risk measure automatically accounts for the importance of banks in the network, such a choice would select an  EAR  with particularly strict requirements for firms with higher debt levels.
\end{enumerate}
\item {\bf Varying price vectors}.
\begin{enumerate}
\item As we have seen in Remark~\ref{Rem:w}, in the situation of Example \ref{network_ex}(i) and (ii), the regulatory price $w(Y^X)$ is a function of  $X \in L^\infty(\Omega; \bbr^n)$ and could reflect its fluctuations,  e.g.\ $w(Y^X)_i = 1/var(X_i) + c$.  In this case, firms with a high variance of future wealth are assigned a low price for their capital, leading to an  EAR  with a high minimal capital requirement for this firm.  The constant $c \geq 0$ sets a floor on the regulatory prices.
\item Alternatively, also other deviation or (adjusted) downside risk measures could be considered. An example is  $w(Y^X)_i = 1/(\rho_i(X_i) + E[X_i]) + c$ for scalar monetary risk measures $\rho_i$, $i = 1,\dots,n$.  As above, $c \geq 0$ is a floor for the regulatory prices.
\end{enumerate}
\end{enumerate}
\end{ex}

\begin{ex}
 EARs  are closely related to systemic risk measures and  allocation rules  that have been studied previously. 
\begin{enumerate}
\item {\bf Aggregation mechanism insensitive to capital levels.}  The 
systemic risk measures suggested by  \ci*{chen2013axiomatic}, see also \ci*{kromer2013systemic}, possess a structure that is very similar to the proposed EARs, cf.\ Example \ref{network_ex}(i). In the coherent case, both contributions can be embedded into our setting.  Letting $Y^X_k := \Lambda(X) + \sum_{i = 1}^n k_i$ for $k \in \bbr^n$,  the scalar-valued systemic risk measures of \ci*{chen2013axiomatic} can be defined by the minimal total capital required to make the aggregation acceptable, i.e.
\[\rho^{CIM}(Y^X) := \rho(\Lambda(X)) = \min\left\{\sum_{i = 1}^n m_i \; | \; Y^X_m \in \acal\right\}.\]
Additionally, \ci*{chen2013axiomatic} and \ci*{kromer2013systemic} consider a risk allocation mechanism to attribute the value of $\rho^{CIM}$ to the different firms via the solution of a dual representation. This allocation mechanism, which we will denote by $X \mapsto k^{CIM}(X)$, satisfies the full allocation property, i.e.\  $\sum _i k^{CIM}_i(X)  = \rho^{CIM}(Y^X)$. This implies that their allocation is included in an  EAR   with price vector $(1,1,\dots, 1)^\T$:
\[ k^{CIM}(X) \subseteq \hat k(Y^X).\]

In contrast to this paper, \ci*{chen2013axiomatic} and \ci*{kromer2013systemic} do, however, not require the cash-invariance of $\rho$ in all cases that they investigate.
\item {\bf SystRisk.} \ci{brunnermeier2013measuring} suggest a scalar systemic risk measure SystRisk that is based on a generalized version of utility-based shortfall risk (UBSR). In a second step, the scalar risk measure is allocated to individual financial institutions.  This systemic risk measure is closely related to aggregation based systemic risk measures insensitive to capital levels, see Example~\ref{network_ex}(i).  The aggregation function  of \ci{brunnermeier2013measuring},
\[\Lambda(x) := \sum_{i = 1}^n [-\alpha_i x^- + \beta_i (x_i-v_i)^+]\]
for parameters $\alpha,\beta,v \in \bbr^n_+$, treats gains and losses asymmetrically.  
The allocation mechanism $k^*$ of \ci{brunnermeier2013measuring} satisfies the full allocation property, thus   
\[k^*(Y^X) \subseteq \argmin\left\{\sum_{i = 1}^n m_i \; | \; \Lambda(X) + \sum_{i = 1}^n m_i \in \acal \right\},\] 
and is included in an  EAR  with price vector $(1,1,\dots, 1)^\T$. Apart from this, \ci{brunnermeier2013measuring} consider further adjustments to their framework in order to capture additional effects.
\end{enumerate}
\end{ex}

\section{A grid search algorithm}\label{Sec:grid}

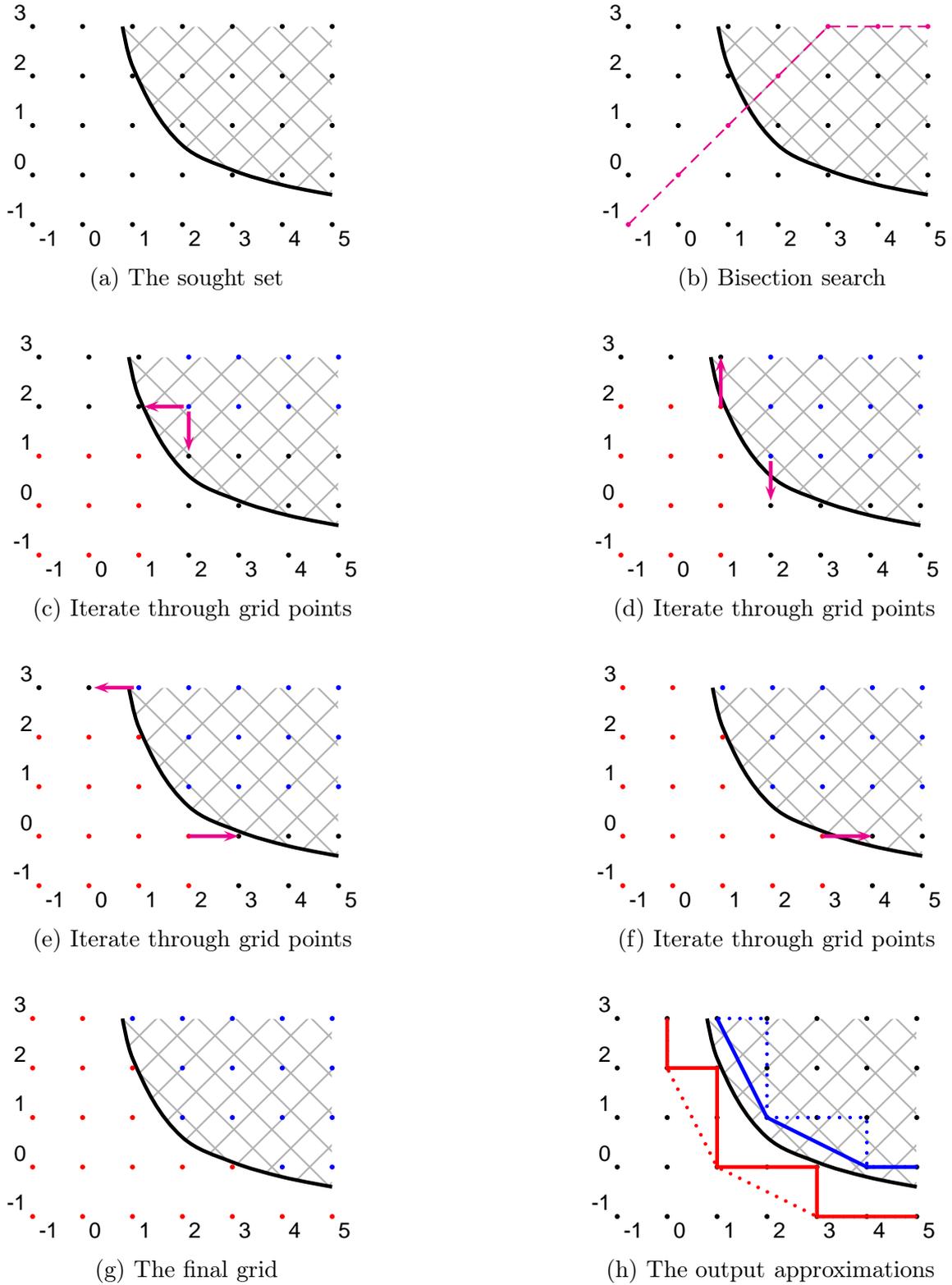
\begin{figure}
\centering
\psset{unit=0.8cm}
\begin{tabular*}{0.9\textwidth}[t]{@{\extracolsep{\fill}} l c r}
\tabpic{
\begin{pspicture}(-1,-1)(5,3)
\psgrid[subgriddiv=0,griddots=1]
\pspolygon[linewidth=0.0mm,linecolor=white,fillstyle=crosshatch,hatchcolor=lg,hatchsep=0.4cm](0.8,3.0)(1.0,2.2)(1.5,1.2)(2.0,0.6)(2.5,0.3)(3.0,0.1)(4.0,-0.2)(5.0,-0.4)(5.0,3.0)
\pscurve[linewidth=0.6mm,linecolor=black]{-}(0.8,3.0)(1.0,2.2)(2.0,0.6)(3.0,0.1)(4.0,-0.2)(5.0,-0.4)
\pscircle[fillstyle=solid,fillcolor=black,linecolor=black](-1.0,-1.0){0.05}
\pscircle[fillstyle=solid,fillcolor=black,linecolor=black](-1.0,0.0){0.05}
\pscircle[fillstyle=solid,fillcolor=black,linecolor=black](-1.0,1.0){0.05}
\pscircle[fillstyle=solid,fillcolor=black,linecolor=black](-1.0,2.0){0.05}
\pscircle[fillstyle=solid,fillcolor=black,linecolor=black](-1.0,3.0){0.05}
\pscircle[fillstyle=solid,fillcolor=black,linecolor=black](0.0,-1.0){0.05}
\pscircle[fillstyle=solid,fillcolor=black,linecolor=black](0.0,0.0){0.05}
\pscircle[fillstyle=solid,fillcolor=black,linecolor=black](0.0,1.0){0.05}
\pscircle[fillstyle=solid,fillcolor=black,linecolor=black](0.0,2.0){0.05}
\pscircle[fillstyle=solid,fillcolor=black,linecolor=black](0.0,3.0){0.05}
\pscircle[fillstyle=solid,fillcolor=black,linecolor=black](1.0,-1.0){0.05}
\pscircle[fillstyle=solid,fillcolor=black,linecolor=black](1.0,0.0){0.05}
\pscircle[fillstyle=solid,fillcolor=black,linecolor=black](1.0,1.0){0.05}
\pscircle[fillstyle=solid,fillcolor=black,linecolor=black](1.0,2.0){0.05}
\pscircle[fillstyle=solid,fillcolor=black,linecolor=black](1.0,3.0){0.05}
\pscircle[fillstyle=solid,fillcolor=black,linecolor=black](2.0,-1.0){0.05}
\pscircle[fillstyle=solid,fillcolor=black,linecolor=black](2.0,0.0){0.05}
\pscircle[fillstyle=solid,fillcolor=black,linecolor=black](2.0,1.0){0.05}
\pscircle[fillstyle=solid,fillcolor=black,linecolor=black](2.0,2.0){0.05}
\pscircle[fillstyle=solid,fillcolor=black,linecolor=black](2.0,3.0){0.05}
\pscircle[fillstyle=solid,fillcolor=black,linecolor=black](3.0,-1.0){0.05}
\pscircle[fillstyle=solid,fillcolor=black,linecolor=black](3.0,0.0){0.05}
\pscircle[fillstyle=solid,fillcolor=black,linecolor=black](3.0,1.0){0.05}
\pscircle[fillstyle=solid,fillcolor=black,linecolor=black](3.0,2.0){0.05}
\pscircle[fillstyle=solid,fillcolor=black,linecolor=black](3.0,3.0){0.05}
\pscircle[fillstyle=solid,fillcolor=black,linecolor=black](4.0,-1.0){0.05}
\pscircle[fillstyle=solid,fillcolor=black,linecolor=black](4.0,0.0){0.05}
\pscircle[fillstyle=solid,fillcolor=black,linecolor=black](4.0,1.0){0.05}
\pscircle[fillstyle=solid,fillcolor=black,linecolor=black](4.0,2.0){0.05}
\pscircle[fillstyle=solid,fillcolor=black,linecolor=black](4.0,3.0){0.05}
\pscircle[fillstyle=solid,fillcolor=black,linecolor=black](5.0,-1.0){0.05}
\pscircle[fillstyle=solid,fillcolor=black,linecolor=black](5.0,0.0){0.05}
\pscircle[fillstyle=solid,fillcolor=black,linecolor=black](5.0,1.0){0.05}
\pscircle[fillstyle=solid,fillcolor=black,linecolor=black](5.0,2.0){0.05}
\pscircle[fillstyle=solid,fillcolor=black,linecolor=black](5.0,3.0){0.05}
\end{pspicture}
}{The sought set}
\tabpic{
\begin{pspicture}(-1.0,-1.0)(5.0,3.0)
\psgrid[subgriddiv=0,griddots=1]
\pspolygon[linewidth=0.0mm,linecolor=white,fillstyle=crosshatch,hatchcolor=lg,hatchsep=0.4cm](0.8,3.0)(1.0,2.2)(1.5,1.2)(2.0,0.6)(2.5,0.3)(3.0,0.1)(4.0,-0.2)(5.0,-0.4)(5.0,3.0)
\pscurve[linewidth=0.6mm,linecolor=black]{-}(0.8,3.0)(1.0,2.2)(2.0,0.6)(3.0,0.1)(4.0,-0.2)(5.0,-0.4)
\pscircle[fillstyle=solid,fillcolor=magenta,linecolor=magenta](-1.0,-1.0){0.05}
\pscircle[fillstyle=solid,fillcolor=black,linecolor=black](-1.0,0.0){0.05}
\pscircle[fillstyle=solid,fillcolor=black,linecolor=black](-1.0,1.0){0.05}
\pscircle[fillstyle=solid,fillcolor=black,linecolor=black](-1.0,2.0){0.05}
\pscircle[fillstyle=solid,fillcolor=black,linecolor=black](-1.0,3.0){0.05}
\pscircle[fillstyle=solid,fillcolor=black,linecolor=black](0.0,-1.0){0.05}
\pscircle[fillstyle=solid,fillcolor=magenta,linecolor=magenta](0.0,0.0){0.05}
\pscircle[fillstyle=solid,fillcolor=black,linecolor=black](0.0,1.0){0.05}
\pscircle[fillstyle=solid,fillcolor=black,linecolor=black](0.0,2.0){0.05}
\pscircle[fillstyle=solid,fillcolor=black,linecolor=black](0.0,3.0){0.05}
\pscircle[fillstyle=solid,fillcolor=black,linecolor=black](1.0,-1.0){0.05}
\pscircle[fillstyle=solid,fillcolor=black,linecolor=black](1.0,0.0){0.05}
\pscircle[fillstyle=solid,fillcolor=magenta,linecolor=magenta](1.0,1.0){0.05}
\pscircle[fillstyle=solid,fillcolor=black,linecolor=black](1.0,2.0){0.05}
\pscircle[fillstyle=solid,fillcolor=black,linecolor=black](1.0,3.0){0.05}
\pscircle[fillstyle=solid,fillcolor=black,linecolor=black](2.0,-1.0){0.05}
\pscircle[fillstyle=solid,fillcolor=black,linecolor=black](2.0,0.0){0.05}
\pscircle[fillstyle=solid,fillcolor=black,linecolor=black](2.0,1.0){0.05}
\pscircle[fillstyle=solid,fillcolor=magenta,linecolor=magenta](2.0,2.0){0.05}
\pscircle[fillstyle=solid,fillcolor=black,linecolor=black](2.0,3.0){0.05}
\pscircle[fillstyle=solid,fillcolor=black,linecolor=black](3.0,-1.0){0.05}
\pscircle[fillstyle=solid,fillcolor=black,linecolor=black](3.0,0.0){0.05}
\pscircle[fillstyle=solid,fillcolor=black,linecolor=black](3.0,1.0){0.05}
\pscircle[fillstyle=solid,fillcolor=black,linecolor=black](3.0,2.0){0.05}
\pscircle[fillstyle=solid,fillcolor=magenta,linecolor=magenta](3.0,3.0){0.05}
\pscircle[fillstyle=solid,fillcolor=black,linecolor=black](4.0,-1.0){0.05}
\pscircle[fillstyle=solid,fillcolor=black,linecolor=black](4.0,0.0){0.05}
\pscircle[fillstyle=solid,fillcolor=black,linecolor=black](4.0,1.0){0.05}
\pscircle[fillstyle=solid,fillcolor=black,linecolor=black](4.0,2.0){0.05}
\pscircle[fillstyle=solid,fillcolor=magenta,linecolor=magenta](4.0,3.0){0.05}
\pscircle[fillstyle=solid,fillcolor=black,linecolor=black](5.0,-1.0){0.05}
\pscircle[fillstyle=solid,fillcolor=black,linecolor=black](5.0,0.0){0.05}
\pscircle[fillstyle=solid,fillcolor=black,linecolor=black](5.0,1.0){0.05}
\pscircle[fillstyle=solid,fillcolor=black,linecolor=black](5.0,2.0){0.05}
\pscircle[fillstyle=solid,fillcolor=magenta,linecolor=magenta](5.0,3.0){0.05}
\psline[linewidth=0.3mm,linecolor=magenta,linestyle=dashed]{-}(-1.0,-1.0)(0.0,0.0)(1.0,1.0)(2.0,2.0)(3.0,3.0)(4.0,3.0)(5.0,3.0)
\end{pspicture}
}{Bisection search}
\tabpic{
\begin{pspicture}(-1.0,-1.0)(5.0,3.0)
\psgrid[subgriddiv=0,griddots=1]
\pspolygon[linewidth=0.0mm,linecolor=white,fillstyle=crosshatch,hatchcolor=lg,hatchsep=0.4cm](0.8,3.0)(1.0,2.2)(1.5,1.2)(2.0,0.6)(2.5,0.3)(3.0,0.1)(4.0,-0.2)(5.0,-0.4)(5.0,3.0)
\pscurve[linewidth=0.6mm,linecolor=black]{-}(0.8,3.0)(1.0,2.2)(2.0,0.6)(3.0,0.1)(4.0,-0.2)(5.0,-0.4)
\pscircle[fillstyle=solid,fillcolor=red,linecolor=red](-1.0,-1.0){0.05}
\pscircle[fillstyle=solid,fillcolor=red,linecolor=red](-1.0,0.0){0.05}
\pscircle[fillstyle=solid,fillcolor=red,linecolor=red](-1.0,1.0){0.05}
\pscircle[fillstyle=solid,fillcolor=black,linecolor=black](-1.0,2.0){0.05}
\pscircle[fillstyle=solid,fillcolor=black,linecolor=black](-1.0,3.0){0.05}
\pscircle[fillstyle=solid,fillcolor=red,linecolor=red](0.0,-1.0){0.05}
\pscircle[fillstyle=solid,fillcolor=red,linecolor=red](0.0,0.0){0.05}
\pscircle[fillstyle=solid,fillcolor=red,linecolor=red](0.0,1.0){0.05}
\pscircle[fillstyle=solid,fillcolor=black,linecolor=black](0.0,2.0){0.05}
\pscircle[fillstyle=solid,fillcolor=black,linecolor=black](0.0,3.0){0.05}
\pscircle[fillstyle=solid,fillcolor=red,linecolor=red](1.0,-1.0){0.05}
\pscircle[fillstyle=solid,fillcolor=red,linecolor=red](1.0,0.0){0.05}
\pscircle[fillstyle=solid,fillcolor=red,linecolor=red](1.0,1.0){0.05}
\pscircle[fillstyle=solid,fillcolor=black,linecolor=black](1.0,2.0){0.05}
\pscircle[fillstyle=solid,fillcolor=black,linecolor=black](1.0,3.0){0.05}
\pscircle[fillstyle=solid,fillcolor=black,linecolor=black](2.0,-1.0){0.05}
\pscircle[fillstyle=solid,fillcolor=black,linecolor=black](2.0,0.0){0.05}
\pscircle[fillstyle=solid,fillcolor=black,linecolor=black](2.0,1.0){0.05}
\pscircle[fillstyle=solid,fillcolor=blue,linecolor=blue](2.0,2.0){0.05}
\pscircle[fillstyle=solid,fillcolor=blue,linecolor=blue](2.0,3.0){0.05}
\pscircle[fillstyle=solid,fillcolor=black,linecolor=black](3.0,-1.0){0.05}
\pscircle[fillstyle=solid,fillcolor=black,linecolor=black](3.0,0.0){0.05}
\pscircle[fillstyle=solid,fillcolor=black,linecolor=black](3.0,1.0){0.05}
\pscircle[fillstyle=solid,fillcolor=blue,linecolor=blue](3.0,2.0){0.05}
\pscircle[fillstyle=solid,fillcolor=blue,linecolor=blue](3.0,3.0){0.05}
\pscircle[fillstyle=solid,fillcolor=black,linecolor=black](4.0,-1.0){0.05}
\pscircle[fillstyle=solid,fillcolor=black,linecolor=black](4.0,0.0){0.05}
\pscircle[fillstyle=solid,fillcolor=black,linecolor=black](4.0,1.0){0.05}
\pscircle[fillstyle=solid,fillcolor=blue,linecolor=blue](4.0,2.0){0.05}
\pscircle[fillstyle=solid,fillcolor=blue,linecolor=blue](4.0,3.0){0.05}
\pscircle[fillstyle=solid,fillcolor=black,linecolor=black](5.0,-1.0){0.05}
\pscircle[fillstyle=solid,fillcolor=black,linecolor=black](5.0,0.0){0.05}
\pscircle[fillstyle=solid,fillcolor=black,linecolor=black](5.0,1.0){0.05}
\pscircle[fillstyle=solid,fillcolor=blue,linecolor=blue](5.0,2.0){0.05}
\pscircle[fillstyle=solid,fillcolor=blue,linecolor=blue](5.0,3.0){0.05}
\psline[linewidth=0.6mm,linecolor=magenta]{->}(1.9,2.0)(1.1,2.0)
\psline[linewidth=0.6mm,linecolor=magenta]{->}(2.0,1.9)(2.0,1.1)
\end{pspicture}
}{Iterate through grid points}
\tabpic{
\begin{pspicture}(-1.0,-1.0)(5.0,3.0)
\psgrid[subgriddiv=0,griddots=1]
\pspolygon[linewidth=0.0mm,linecolor=white,fillstyle=crosshatch,hatchcolor=lg,hatchsep=0.4cm](0.8,3.0)(1.0,2.2)(1.5,1.2)(2.0,0.6)(2.5,0.3)(3.0,0.1)(4.0,-0.2)(5.0,-0.4)(5.0,3.0)
\pscurve[linewidth=0.6mm,linecolor=black]{-}(0.8,3.0)(1.0,2.2)(2.0,0.6)(3.0,0.1)(4.0,-0.2)(5.0,-0.4)
\pscircle[fillstyle=solid,fillcolor=red,linecolor=red](-1.0,-1.0){0.05}
\pscircle[fillstyle=solid,fillcolor=red,linecolor=red](-1.0,0.0){0.05}
\pscircle[fillstyle=solid,fillcolor=red,linecolor=red](-1.0,1.0){0.05}
\pscircle[fillstyle=solid,fillcolor=red,linecolor=red](-1.0,2.0){0.05}
\pscircle[fillstyle=solid,fillcolor=black,linecolor=black](-1.0,3.0){0.05}
\pscircle[fillstyle=solid,fillcolor=red,linecolor=red](0.0,-1.0){0.05}
\pscircle[fillstyle=solid,fillcolor=red,linecolor=red](0.0,0.0){0.05}
\pscircle[fillstyle=solid,fillcolor=red,linecolor=red](0.0,1.0){0.05}
\pscircle[fillstyle=solid,fillcolor=red,linecolor=red](0.0,2.0){0.05}
\pscircle[fillstyle=solid,fillcolor=black,linecolor=black](0.0,3.0){0.05}
\pscircle[fillstyle=solid,fillcolor=red,linecolor=red](1.0,-1.0){0.05}
\pscircle[fillstyle=solid,fillcolor=red,linecolor=red](1.0,0.0){0.05}
\pscircle[fillstyle=solid,fillcolor=red,linecolor=red](1.0,1.0){0.05}
\pscircle[fillstyle=solid,fillcolor=red,linecolor=red](1.0,2.0){0.05}
\pscircle[fillstyle=solid,fillcolor=black,linecolor=black](1.0,3.0){0.05}
\pscircle[fillstyle=solid,fillcolor=black,linecolor=black](2.0,-1.0){0.05}
\pscircle[fillstyle=solid,fillcolor=black,linecolor=black](2.0,0.0){0.05}
\pscircle[fillstyle=solid,fillcolor=blue,linecolor=blue](2.0,1.0){0.05}
\pscircle[fillstyle=solid,fillcolor=blue,linecolor=blue](2.0,2.0){0.05}
\pscircle[fillstyle=solid,fillcolor=blue,linecolor=blue](2.0,3.0){0.05}
\pscircle[fillstyle=solid,fillcolor=black,linecolor=black](3.0,-1.0){0.05}
\pscircle[fillstyle=solid,fillcolor=black,linecolor=black](3.0,0.0){0.05}
\pscircle[fillstyle=solid,fillcolor=blue,linecolor=blue](3.0,1.0){0.05}
\pscircle[fillstyle=solid,fillcolor=blue,linecolor=blue](3.0,2.0){0.05}
\pscircle[fillstyle=solid,fillcolor=blue,linecolor=blue](3.0,3.0){0.05}
\pscircle[fillstyle=solid,fillcolor=black,linecolor=black](4.0,-1.0){0.05}
\pscircle[fillstyle=solid,fillcolor=black,linecolor=black](4.0,0.0){0.05}
\pscircle[fillstyle=solid,fillcolor=blue,linecolor=blue](4.0,1.0){0.05}
\pscircle[fillstyle=solid,fillcolor=blue,linecolor=blue](4.0,2.0){0.05}
\pscircle[fillstyle=solid,fillcolor=blue,linecolor=blue](4.0,3.0){0.05}
\pscircle[fillstyle=solid,fillcolor=black,linecolor=black](5.0,-1.0){0.05}
\pscircle[fillstyle=solid,fillcolor=black,linecolor=black](5.0,0.0){0.05}
\pscircle[fillstyle=solid,fillcolor=blue,linecolor=blue](5.0,1.0){0.05}
\pscircle[fillstyle=solid,fillcolor=blue,linecolor=blue](5.0,2.0){0.05}
\pscircle[fillstyle=solid,fillcolor=blue,linecolor=blue](5.0,3.0){0.05}
\psline[linewidth=0.6mm,linecolor=magenta]{->}(1.0,2.0)(1.0,3.0)
\psline[linewidth=0.6mm,linecolor=magenta]{->}(2.0,0.9)(2.0,0.1)
\end{pspicture}
}{Iterate through grid points}
\tabpic{
\begin{pspicture}(-1.0,-1.0)(5.0,3.0)
\psgrid[subgriddiv=0,griddots=1]
\pspolygon[linewidth=0.0mm,linecolor=white,fillstyle=crosshatch,hatchcolor=lg,hatchsep=0.4cm](0.8,3.0)(1.0,2.2)(1.5,1.2)(2.0,0.6)(2.5,0.3)(3.0,0.1)(4.0,-0.2)(5.0,-0.4)(5.0,3.0)
\pscurve[linewidth=0.6mm,linecolor=black]{-}(0.8,3.0)(1.0,2.2)(2.0,0.6)(3.0,0.1)(4.0,-0.2)(5.0,-0.4)
\pscircle[fillstyle=solid,fillcolor=red,linecolor=red](-1.0,-1.0){0.05}
\pscircle[fillstyle=solid,fillcolor=red,linecolor=red](-1.0,0.0){0.05}
\pscircle[fillstyle=solid,fillcolor=red,linecolor=red](-1.0,1.0){0.05}
\pscircle[fillstyle=solid,fillcolor=red,linecolor=red](-1.0,2.0){0.05}
\pscircle[fillstyle=solid,fillcolor=black,linecolor=black](-1.0,3.0){0.05}
\pscircle[fillstyle=solid,fillcolor=red,linecolor=red](0.0,-1.0){0.05}
\pscircle[fillstyle=solid,fillcolor=red,linecolor=red](0.0,0.0){0.05}
\pscircle[fillstyle=solid,fillcolor=red,linecolor=red](0.0,1.0){0.05}
\pscircle[fillstyle=solid,fillcolor=red,linecolor=red](0.0,2.0){0.05}
\pscircle[fillstyle=solid,fillcolor=black,linecolor=black](0.0,3.0){0.05}
\pscircle[fillstyle=solid,fillcolor=red,linecolor=red](1.0,-1.0){0.05}
\pscircle[fillstyle=solid,fillcolor=red,linecolor=red](1.0,0.0){0.05}
\pscircle[fillstyle=solid,fillcolor=red,linecolor=red](1.0,1.0){0.05}
\pscircle[fillstyle=solid,fillcolor=red,linecolor=red](1.0,2.0){0.05}
\pscircle[fillstyle=solid,fillcolor=blue,linecolor=blue](1.0,3.0){0.05}
\pscircle[fillstyle=solid,fillcolor=red,linecolor=red](2.0,-1.0){0.05}
\pscircle[fillstyle=solid,fillcolor=red,linecolor=red](2.0,0.0){0.05}
\pscircle[fillstyle=solid,fillcolor=blue,linecolor=blue](2.0,1.0){0.05}
\pscircle[fillstyle=solid,fillcolor=blue,linecolor=blue](2.0,2.0){0.05}
\pscircle[fillstyle=solid,fillcolor=blue,linecolor=blue](2.0,3.0){0.05}
\pscircle[fillstyle=solid,fillcolor=black,linecolor=black](3.0,-1.0){0.05}
\pscircle[fillstyle=solid,fillcolor=black,linecolor=black](3.0,0.0){0.05}
\pscircle[fillstyle=solid,fillcolor=blue,linecolor=blue](3.0,1.0){0.05}
\pscircle[fillstyle=solid,fillcolor=blue,linecolor=blue](3.0,2.0){0.05}
\pscircle[fillstyle=solid,fillcolor=blue,linecolor=blue](3.0,3.0){0.05}
\pscircle[fillstyle=solid,fillcolor=black,linecolor=black](4.0,-1.0){0.05}
\pscircle[fillstyle=solid,fillcolor=black,linecolor=black](4.0,0.0){0.05}
\pscircle[fillstyle=solid,fillcolor=blue,linecolor=blue](4.0,1.0){0.05}
\pscircle[fillstyle=solid,fillcolor=blue,linecolor=blue](4.0,2.0){0.05}
\pscircle[fillstyle=solid,fillcolor=blue,linecolor=blue](4.0,3.0){0.05}
\pscircle[fillstyle=solid,fillcolor=black,linecolor=black](5.0,-1.0){0.05}
\pscircle[fillstyle=solid,fillcolor=black,linecolor=black](5.0,0.0){0.05}
\pscircle[fillstyle=solid,fillcolor=blue,linecolor=blue](5.0,1.0){0.05}
\pscircle[fillstyle=solid,fillcolor=blue,linecolor=blue](5.0,2.0){0.05}
\pscircle[fillstyle=solid,fillcolor=blue,linecolor=blue](5.0,3.0){0.05}
\psline[linewidth=0.6mm,linecolor=magenta]{->}(0.9,3.0)(0.1,3.0)
\psline[linewidth=0.6mm,linecolor=magenta]{->}(2.0,0.0)(3.0,0.0)
\end{pspicture}
}{Iterate through grid points}

\tabpic{
\begin{pspicture}(-1.0,-1.0)(5.0,3.0)
\psgrid[subgriddiv=0,griddots=1]
\pspolygon[linewidth=0.0mm,linecolor=white,fillstyle=crosshatch,hatchcolor=lg,hatchsep=0.4cm](0.8,3.0)(1.0,2.2)(1.5,1.2)(2.0,0.6)(2.5,0.3)(3.0,0.1)(4.0,-0.2)(5.0,-0.4)(5.0,3.0)
\pscurve[linewidth=0.6mm,linecolor=black]{-}(0.8,3.0)(1.0,2.2)(2.0,0.6)(3.0,0.1)(4.0,-0.2)(5.0,-0.4)
\pscircle[fillstyle=solid,fillcolor=red,linecolor=red](-1.0,-1.0){0.05}
\pscircle[fillstyle=solid,fillcolor=red,linecolor=red](-1.0,0.0){0.05}
\pscircle[fillstyle=solid,fillcolor=red,linecolor=red](-1.0,1.0){0.05}
\pscircle[fillstyle=solid,fillcolor=red,linecolor=red](-1.0,2.0){0.05}
\pscircle[fillstyle=solid,fillcolor=red,linecolor=red](-1.0,3.0){0.05}
\pscircle[fillstyle=solid,fillcolor=red,linecolor=red](0.0,-1.0){0.05}
\pscircle[fillstyle=solid,fillcolor=red,linecolor=red](0.0,0.0){0.05}
\pscircle[fillstyle=solid,fillcolor=red,linecolor=red](0.0,1.0){0.05}
\pscircle[fillstyle=solid,fillcolor=red,linecolor=red](0.0,2.0){0.05}
\pscircle[fillstyle=solid,fillcolor=red,linecolor=red](0.0,3.0){0.05}
\pscircle[fillstyle=solid,fillcolor=red,linecolor=red](1.0,-1.0){0.05}
\pscircle[fillstyle=solid,fillcolor=red,linecolor=red](1.0,0.0){0.05}
\pscircle[fillstyle=solid,fillcolor=red,linecolor=red](1.0,1.0){0.05}
\pscircle[fillstyle=solid,fillcolor=red,linecolor=red](1.0,2.0){0.05}
\pscircle[fillstyle=solid,fillcolor=blue,linecolor=blue](1.0,3.0){0.05}
\pscircle[fillstyle=solid,fillcolor=red,linecolor=red](2.0,-1.0){0.05}
\pscircle[fillstyle=solid,fillcolor=red,linecolor=red](2.0,0.0){0.05}
\pscircle[fillstyle=solid,fillcolor=blue,linecolor=blue](2.0,1.0){0.05}
\pscircle[fillstyle=solid,fillcolor=blue,linecolor=blue](2.0,2.0){0.05}
\pscircle[fillstyle=solid,fillcolor=blue,linecolor=blue](2.0,3.0){0.05}
\pscircle[fillstyle=solid,fillcolor=red,linecolor=red](3.0,-1.0){0.05}
\pscircle[fillstyle=solid,fillcolor=red,linecolor=red](3.0,0.0){0.05}
\pscircle[fillstyle=solid,fillcolor=blue,linecolor=blue](3.0,1.0){0.05}
\pscircle[fillstyle=solid,fillcolor=blue,linecolor=blue](3.0,2.0){0.05}
\pscircle[fillstyle=solid,fillcolor=blue,linecolor=blue](3.0,3.0){0.05}
\pscircle[fillstyle=solid,fillcolor=black,linecolor=black](4.0,-1.0){0.05}
\pscircle[fillstyle=solid,fillcolor=black,linecolor=black](4.0,0.0){0.05}
\pscircle[fillstyle=solid,fillcolor=blue,linecolor=blue](4.0,1.0){0.05}
\pscircle[fillstyle=solid,fillcolor=blue,linecolor=blue](4.0,2.0){0.05}
\pscircle[fillstyle=solid,fillcolor=blue,linecolor=blue](4.0,3.0){0.05}
\pscircle[fillstyle=solid,fillcolor=black,linecolor=black](5.0,-1.0){0.05}
\pscircle[fillstyle=solid,fillcolor=black,linecolor=black](5.0,0.0){0.05}
\pscircle[fillstyle=solid,fillcolor=blue,linecolor=blue](5.0,1.0){0.05}
\pscircle[fillstyle=solid,fillcolor=blue,linecolor=blue](5.0,2.0){0.05}
\pscircle[fillstyle=solid,fillcolor=blue,linecolor=blue](5.0,3.0){0.05}
\psline[linewidth=0.6mm,linecolor=magenta]{->}(3.0,0.0)(4.0,0.0)
\end{pspicture}
}{Iterate through grid points}
\tabpic{
\begin{pspicture}(-1.0,-1.0)(5.0,3.0)
\psgrid[subgriddiv=0,griddots=1]
\pspolygon[linewidth=0.0mm,linecolor=white,fillstyle=crosshatch,hatchcolor=lg,hatchsep=0.4cm](0.8,3.0)(1.0,2.2)(1.5,1.2)(2.0,0.6)(2.5,0.3)(3.0,0.1)(4.0,-0.2)(5.0,-0.4)(5.0,3.0)
\pscurve[linewidth=0.6mm,linecolor=black]{-}(0.8,3.0)(1.0,2.2)(2.0,0.6)(3.0,0.1)(4.0,-0.2)(5.0,-0.4)
\pscircle[fillstyle=solid,fillcolor=red,linecolor=red](-1.0,-1.0){0.05}
\pscircle[fillstyle=solid,fillcolor=red,linecolor=red](-1.0,0.0){0.05}
\pscircle[fillstyle=solid,fillcolor=red,linecolor=red](-1.0,1.0){0.05}
\pscircle[fillstyle=solid,fillcolor=red,linecolor=red](-1.0,2.0){0.05}
\pscircle[fillstyle=solid,fillcolor=red,linecolor=red](-1.0,3.0){0.05}
\pscircle[fillstyle=solid,fillcolor=red,linecolor=red](0.0,-1.0){0.05}
\pscircle[fillstyle=solid,fillcolor=red,linecolor=red](0.0,0.0){0.05}
\pscircle[fillstyle=solid,fillcolor=red,linecolor=red](0.0,1.0){0.05}
\pscircle[fillstyle=solid,fillcolor=red,linecolor=red](0.0,2.0){0.05}
\pscircle[fillstyle=solid,fillcolor=red,linecolor=red](0.0,3.0){0.05}
\pscircle[fillstyle=solid,fillcolor=red,linecolor=red](1.0,-1.0){0.05}
\pscircle[fillstyle=solid,fillcolor=red,linecolor=red](1.0,0.0){0.05}
\pscircle[fillstyle=solid,fillcolor=red,linecolor=red](1.0,1.0){0.05}
\pscircle[fillstyle=solid,fillcolor=red,linecolor=red](1.0,2.0){0.05}
\pscircle[fillstyle=solid,fillcolor=blue,linecolor=blue](1.0,3.0){0.05}
\pscircle[fillstyle=solid,fillcolor=red,linecolor=red](2.0,-1.0){0.05}
\pscircle[fillstyle=solid,fillcolor=red,linecolor=red](2.0,0.0){0.05}
\pscircle[fillstyle=solid,fillcolor=blue,linecolor=blue](2.0,1.0){0.05}
\pscircle[fillstyle=solid,fillcolor=blue,linecolor=blue](2.0,2.0){0.05}
\pscircle[fillstyle=solid,fillcolor=blue,linecolor=blue](2.0,3.0){0.05}
\pscircle[fillstyle=solid,fillcolor=red,linecolor=red](3.0,-1.0){0.05}
\pscircle[fillstyle=solid,fillcolor=red,linecolor=red](3.0,0.0){0.05}
\pscircle[fillstyle=solid,fillcolor=blue,linecolor=blue](3.0,1.0){0.05}
\pscircle[fillstyle=solid,fillcolor=blue,linecolor=blue](3.0,2.0){0.05}
\pscircle[fillstyle=solid,fillcolor=blue,linecolor=blue](3.0,3.0){0.05}
\pscircle[fillstyle=solid,fillcolor=red,linecolor=red](4.0,-1.0){0.05}
\pscircle[fillstyle=solid,fillcolor=blue,linecolor=blue](4.0,0.0){0.05}
\pscircle[fillstyle=solid,fillcolor=blue,linecolor=blue](4.0,1.0){0.05}
\pscircle[fillstyle=solid,fillcolor=blue,linecolor=blue](4.0,2.0){0.05}
\pscircle[fillstyle=solid,fillcolor=blue,linecolor=blue](4.0,3.0){0.05}
\pscircle[fillstyle=solid,fillcolor=red,linecolor=red](5.0,-1.0){0.05}
\pscircle[fillstyle=solid,fillcolor=blue,linecolor=blue](5.0,0.0){0.05}
\pscircle[fillstyle=solid,fillcolor=blue,linecolor=blue](5.0,1.0){0.05}
\pscircle[fillstyle=solid,fillcolor=blue,linecolor=blue](5.0,2.0){0.05}
\pscircle[fillstyle=solid,fillcolor=blue,linecolor=blue](5.0,3.0){0.05}
\end{pspicture}
}{The final grid}
\tabpic{
\begin{pspicture}(-1.0,-1.0)(5.0,3.0)
\psgrid[subgriddiv=0,griddots=1]
\pspolygon[linewidth=0.0mm,linecolor=white,fillstyle=crosshatch,hatchcolor=lg,hatchsep=0.4cm](0.8,3.0)(1.0,2.2)(1.5,1.2)(2.0,0.6)(2.5,0.3)(3.0,0.1)(4.0,-0.2)(5.0,-0.4)(5.0,3.0)
\pscurve[linewidth=0.6mm,linecolor=black]{-}(0.8,3.0)(1.0,2.2)(2.0,0.6)(3.0,0.1)(4.0,-0.2)(5.0,-0.4)
\pscircle[fillstyle=solid,fillcolor=black,linecolor=black](-1.0,-1.0){0.05}
\pscircle[fillstyle=solid,fillcolor=black,linecolor=black](-1.0,0.0){0.05}
\pscircle[fillstyle=solid,fillcolor=black,linecolor=black](-1.0,1.0){0.05}
\pscircle[fillstyle=solid,fillcolor=black,linecolor=black](-1.0,2.0){0.05}
\pscircle[fillstyle=solid,fillcolor=black,linecolor=black](-1.0,3.0){0.05}
\pscircle[fillstyle=solid,fillcolor=black,linecolor=black](0.0,-1.0){0.05}
\pscircle[fillstyle=solid,fillcolor=black,linecolor=black](0.0,0.0){0.05}
\pscircle[fillstyle=solid,fillcolor=black,linecolor=black](0.0,1.0){0.05}
\pscircle[fillstyle=solid,fillcolor=black,linecolor=black](0.0,2.0){0.05}
\pscircle[fillstyle=solid,fillcolor=black,linecolor=black](0.0,3.0){0.05}
\pscircle[fillstyle=solid,fillcolor=black,linecolor=black](1.0,-1.0){0.05}
\pscircle[fillstyle=solid,fillcolor=black,linecolor=black](1.0,0.0){0.05}
\pscircle[fillstyle=solid,fillcolor=black,linecolor=black](1.0,1.0){0.05}
\pscircle[fillstyle=solid,fillcolor=black,linecolor=black](1.0,2.0){0.05}
\pscircle[fillstyle=solid,fillcolor=black,linecolor=black](1.0,3.0){0.05}
\pscircle[fillstyle=solid,fillcolor=black,linecolor=black](2.0,-1.0){0.05}
\pscircle[fillstyle=solid,fillcolor=black,linecolor=black](2.0,0.0){0.05}
\pscircle[fillstyle=solid,fillcolor=black,linecolor=black](2.0,1.0){0.05}
\pscircle[fillstyle=solid,fillcolor=black,linecolor=black](2.0,2.0){0.05}
\pscircle[fillstyle=solid,fillcolor=black,linecolor=black](2.0,3.0){0.05}
\pscircle[fillstyle=solid,fillcolor=black,linecolor=black](3.0,-1.0){0.05}
\pscircle[fillstyle=solid,fillcolor=black,linecolor=black](3.0,0.0){0.05}
\pscircle[fillstyle=solid,fillcolor=black,linecolor=black](3.0,1.0){0.05}
\pscircle[fillstyle=solid,fillcolor=black,linecolor=black](3.0,2.0){0.05}
\pscircle[fillstyle=solid,fillcolor=black,linecolor=black](3.0,3.0){0.05}
\pscircle[fillstyle=solid,fillcolor=black,linecolor=black](4.0,-1.0){0.05}
\pscircle[fillstyle=solid,fillcolor=black,linecolor=black](4.0,0.0){0.05}
\pscircle[fillstyle=solid,fillcolor=black,linecolor=black](4.0,1.0){0.05}
\pscircle[fillstyle=solid,fillcolor=black,linecolor=black](4.0,2.0){0.05}
\pscircle[fillstyle=solid,fillcolor=black,linecolor=black](4.0,3.0){0.05}
\pscircle[fillstyle=solid,fillcolor=black,linecolor=black](5.0,-1.0){0.05}
\pscircle[fillstyle=solid,fillcolor=black,linecolor=black](5.0,0.0){0.05}
\pscircle[fillstyle=solid,fillcolor=black,linecolor=black](5.0,1.0){0.05}
\pscircle[fillstyle=solid,fillcolor=black,linecolor=black](5.0,2.0){0.05}
\pscircle[fillstyle=solid,fillcolor=black,linecolor=black](5.0,3.0){0.05}
\psline[linewidth=0.6mm,linecolor=blue]{-}(1.0,3.0)(2.0,1.0)(4.0,0.0)(5.0,0.0)
\psline[linewidth=0.6mm,linecolor=red]{-}(0.0,3.0)(0.0,2.0)(1.0,2.0)(1.0,1.0)(1.0,0.0)(2.0,0.0)(3.0,0.0)(3.0,-1.0)(4.0,-1.0)(5.0,-1.0)
\psline[linewidth=0.6mm,linecolor=blue,linestyle=dotted]{-}(1.0,3.0)(2.0,3.0)(2.0,2.0)(2.0,1.0)(3.0,1.0)(4.0,1.0)(4.0,0.0)(5.0,0.0)
\psline[linewidth=0.6mm,linecolor=red,linestyle=dotted]{-}(0.0,3.0)(0.0,2.0)(1.0,0.0)(3.0,-1.0)(4.0,-1.0)(5.0,-1.0)
\end{pspicture}
}{The output approximations}
\end{tabular*}
\caption{Illustration of the grid search algorithm for systemic risk measures.  The blue grid points represent
those points that are known to be in $R$ and the red grid points represent those points that are known not to be inside.}\label{fig:algo}
\end{figure}

Set-valued systemic risk measures might, at first sight, seem too complicated to be computed easily, but, in fact, they are not. In this section, we construct a  simple grid search algorithm that provides an approximation to systemic risk measures. This method was implemented in the context of the numerical examples that are presented in Section \ref{Sec:casestudies}.

The suggested grid search algorithm approximates the boundary of the systemic risk measure on a given compact area with an accuracy that is specified a priori. The method is illustrated in Figure \ref{fig:algo}. The algorithm constructs an upper set $\bar R(Y;k)$ of $R(Y;k)$ that is very close to $R(Y;k)$ in the following sense.
\begin{defi}
Let $v \in \bbr^l_{++}$. A \emph{$v$-approximation} of the systemic risk measure $R$ is a function $\bar R: \ycal \times \bbr^l \to \pcal(\bbr^l;\bbr^l_+)$ such that 
$$ \bar R(Y;k) \subseteq R(Y;k) \subseteq \bar R(Y;k) - v \quad (\forall k \in \bbr^l, \; \forall Y \in \ycal). $$
\end{defi}
 
Due the cash-invariance of $R$, we assume that also the approximation $\bar R$ is cash-invariant and focus w.l.o.g.\ on $k=0$. 

We will now explain the algorithm as illustrated in Figure \ref{fig:algo}.   
First the compact box of interest is subdivided by grid points; the distance between grid points determines the accuracy of the algorithm. The algorithm seeks to determine the marked area in Figure \ref{fig:algo}(a). For this purpose, the algorithm will verify for all nodes $m \in \bbr^l$ on the grid, if the condition  $Y_{m} \in \acal$   is satisfied. This is equivalent to  $m \in R (Y)$.   For illustrative purposes, we will assume that each sub-box is a cube in $\bbr^l$.

Figure \ref{fig:algo}(b) shows the first step of the algorithm, a bisection search along a diagonal of the box with direction $(1,1, \dots, 1)^\T \in \bbr^l$. This leads to the identification of two nodes which are neighbors on this diagonal, one inside  $R(Y)$  and one outside. Since systemic risk measures are upper sets with ordering cone $\bbr^l_+$,  $m \in R(Y)$  implies that  $m  + \bbr^l_+ \subseteq R(Y)$,  while  $m\not\in R(Y)$   implies   $m  - \bbr^l_+ \subseteq R(Y)^c$,  the complement of  $R(Y)$.   This is indicated by the blue and red nodes in Figure \ref{fig:algo}(c).

Whenever two nodes are neighbors in the direction of the vector $(1,1, \dots, 1)^\T$, one belonging to  $R (Y)$ 
and the other not, the boundary of  $R (Y)$  intersects the small sub-box defined by these two points. The
algorithm proceeds with checking the other edges of the sub-box. The fact that  $R (Y)$  is an upper set with ordering cone $\bbr^l_+$ can again be exploited in each step as described above. This part of the procedure is illustrated in Figures \ref{fig:algo}(c)-(f).

The algorithm terminates when all points in the box of interest were either identified as elements of  $R (Y)$  or
its complement, see Figure \ref{fig:algo}(g). The boundary of  $R (Y)$  thus lies in between the red and blue surface, as indicated in Figure \ref{fig:algo}(h). Instead of the outer polyhedral approximation, its convex hull (dotted red line) might be used as an outer approximation if convexity is desired. Similarly, if convexity is not guaranteed, the inner convex approximation might be replaced by the inner polyhedral approximation (dotted blue line).

\begin{rem}
The suggested algorithm can be modified in order to exploit the advantages of parallel processing. Parallel computing is possible if successively finer grids are implemented. The same procedure can also be used in order to improve the precision of a previously derived approximation. 

The initial step (on a coarse grid) consists in the sequential algorithm that we proposed above.  In a second step, each sub-box that contains a part of the boundary of the systemic risk measure can then be recomputed with the sequential algorithm on a finer grid. The computations on each of these sub-boxes can be performed independently of each other. Depending on the number and speed of the available processors, a successively refined computation on a sequence of grids can be repeated multiple times.  

In order to improve the efficiency of parallel algorithms for systemic risk measures, the following ideas might be exploited in the future:
\begin{enumerate}
\item Checking if a node belongs to a systemic risk measurement or not, typically requires Monte Carlo simulations. The efficiency can thus be improved by implementing suitable variance reduction techniques. Moreover, the condition  $Y_{m} \in \acal$  does not require precise knowledge of   $Y_{m}$.  In particular, if the simulation results indicate that  $Y_{m}$  clearly lies inside or outside $\acal$, less accuracy is sufficient than in boundary cases. This intuition could be made precise by formulating appropriate stopping rules for simulation procedures that are inspired by results from sequential statistics, see e.g.\ \ci{Siegmund}.
\item Since the values of systemic risk measures are upper (convex) sets, image processing techniques might be applied to improve the final results of the computations and refine the approximations further. 
\end{enumerate}
\end{rem}

\begin{rem}
The algorithm can also be used for high dimensional problems -- at the price of substantially larger computation times and required memory capacity. The dimension can, however, be reduced if the entities can be subdivided into groups with equal capital requirements, as described in Example~\ref{network_ex}(iv). Such an approach is  realistic, if regulatory requirements depend on the type of the firms (see e.g.\ \ci{dodd-frank}). 
\end{rem}

\begin{rem}
If only  EARs  are computed, it is unnecessary to determine the whole boundary of the underlying systemic risk measurement. 
\begin{enumerate}
\item
According to  Lemma~\ref{thm:scalarization}, the aim is to compute a point in the boundary with normal vector $w: \ycal \to \bbr^l_{++} $ . A good strategy consists in applying the grid search algorithm sequentially on sub-boxes, but only focusing on those boxes that could contain a point on the frontier with supporting hyperplane with normal vector $w(Y)$. 
\item Utilizing Lemma~\ref{thm:scalarization}, observe that the computation of an EAR is equivalent to finding the solution of a convex optimization problem.  This allows the application of numerical techniques such as a pattern search or particle swarm methods. 
\end{enumerate}
\end{rem}

\section{Numerical case studies}\label{Sec:casestudies}

We illustrate the proposed systemic risk measures in case studies. Section~\ref{Sec:cs-agg} studies  different aggregation functions, see Examples~\ref{network_ex}(i) and (ii). The special case without sensitivity to capital levels corresponds to the setting of  \ci*{chen2013axiomatic} and \ci*{kromer2013systemic}.  Section~\ref{Sec:cs-network} investigates network models that are based on extensions of the basic network models of \ci{EN01}, cf.\ e.g.\ \ci{CFS05}, \ci{AFM15}, \ci{RV11}, \ci{AW_15}, \ci{fei15}, \ci{hurd15}. As mentioned in Example~\ref{network_ex}(iii), we include local interaction via direct credit links and global interaction induced by price impact. The dimension of the problem is reduced by considering a small number of homogeneous groups of financial institutions, see Example~\ref{network_ex}(iv).

\subsection{Comparative statics for aggregation functions}\label{Sec:cs-agg}
\begin{figure}[p]
\centering
\begin{minipage}{\textwidth}
\centering
\includegraphics[height=0.3\textheight,width=0.6\textwidth]{merge-agg-full.eps}
\caption{Comparison of capital requirements for different aggregation functions}
\label{fig:agg-full}
\end{minipage}
\begin{minipage}{\textwidth}
~\\[1cm]
\end{minipage}
\begin{minipage}{\textwidth}
\centering
\includegraphics[height=0.3\textheight,width=0.6\textwidth]{merge-agg-zoom.eps}
\caption{Comparison of capital requirements for different aggregation functions on the interval $[0.6,1.2]^2$}
\label{fig:agg-zoom}
\end{minipage}
\end{figure}

Let us now investigate systemic risk measures with different aggregation functions in Examples~\ref{network_ex} (i) and (ii), cf.\ \ci*{chen2013axiomatic} and \ci*{kromer2013systemic}.

\begin{enumerate}
\item {\bf System-wide profit and loss:} 
 A simple example is  the aggregation function\footnote{Example 1 in \ci*{chen2013axiomatic}; Example 3.4 in \ci*{kromer2013systemic}}  \[\Lambda_{sum}(x) := \sum_{i = 1}^n x_i\] where $x_i$ is interpreted as the profit of financial firm $i$; a negative profit does, of course, indicate a loss.  Profits are simply aggregated additively across entities. The system is thus modeled like a portfolio which is diversified across different assets.  For this choice of $\Lambda_{sum}$ both aggregation mechanisms in Examples~\ref{network_ex} (i) and (ii), either insensitive  or sensitive to capital levels, are identical.   
 
\item {\bf System-wide losses:}  Another aggregation function\footnote{Example 2 in \ci*{chen2013axiomatic}; Example 3.5 in \ci*{kromer2013systemic}} 
focuses only on the downside, but not on the upside risk:  \[\Lambda_{loss}(x) := \sum_{i = 1}^n -x_i^-\] where $z^- := \max(-z,0)$. Unlike $\Lambda_{sum}$ in the previous example, this aggregation function does not allow for a profitable firm to subsidize the losses of a distressed firm. For $\Lambda_{loss}$ the aggregation mechanisms in Examples~\ref{network_ex} (i) and (ii) are different.

\item {\bf Exponential loss:}  The aggregation function\footnote{Example 3.6 in \ci*{kromer2013systemic}}  \[\Lambda_{exp}(x) := \sum_{i = 1}^n [1 - \exp(2x_i^-)]\] focuses on the downside risk only, but aggregates losses non-linearly. It could be used if a regulatory authority prefers multiple small losses over one large loss.   It would, however, provide regulatory incentives for entities to split into multiple sub-entities.  For $\Lambda_{exp}$ the aggregation mechanisms in Examples~\ref{network_ex} (i) and (ii) are different.
\end{enumerate}

In numerical case studies we will now illustrate how capital requirements depend on the choice of the aggregation function $\Lambda_{sum}$, $\Lambda_{loss}$, and $\Lambda_{exp}$. For simplicity, we assume for $n=100$ that $X=(X_i)_{i=1,\dots, 100}$ is a random vector with lognormal margins and Gaussian copula, i.e.
$$X_i \sim \exp(\mu + \sigma N_i) + b, \quad\quad i=1, \dots, 100,$$ 
where $N=(N_i)_{i=1, \dots, 100}$ is jointly Gaussian with standardized margins and a common pairwise correlation of $80\%$.  We fix $\sigma = 1$, and choose $\mu$ and $b$ such that the maximum loss for each entity is  $-1$ with a marginal probability of $25\%$ for incurring a loss; this corresponds to $b=-1$ and $\mu = \Phi^{-1}(0.75) \approx 0.6745$ where $\Phi$ is the cumulative distribution function of the standard normal. The Gaussian copula should not be considered a realistic choice. It provides a model for the underlying factors in which tail dependence is not present. As we will see, systemic interaction still creates overall risk. 

The 100 financial firms are divided into two symmetric groups of 50 firms each. We assume that the capital levels of the firms in a group are equal, i.e.\ we choose, analogous to  Example~\ref{network_ex} (iv), the function 
$$g(k_1, k_2) = (\underbrace{k_1, k_1, \dots, k_1}_{50},  \underbrace{k_2, k_2, \dots, k_2}_{50} )^\T.$$ 

We can now specify five  CVMs  that capture the random outcomes for the three aggregation functions and two aggregation mechanisms, either sensitive (labeled by +) or insensitive (labeled by -) to capital levels $k= \left( \begin{array}{c} k_1 \\ k_ 2 \end{array} \right)$:
\begin{eqnarray*}
Y_{sum} (k)      & \; =  \;&   \Lambda_{sum} (g(k)+X) = \Lambda_{sum} (X) +  50 \cdot (k_1 + k_2)   \\
Y_{loss,-} (k)     &  \; =  \; &  \Lambda_{loss} (X) +  50 \cdot (k_1 + k_2)   \\
Y_{loss,+} (k)    &  \; =  \; &  \Lambda_{loss} (g(k)+X) \\
Y_{exp,-}  (k)    &  \; =  \; &  \Lambda_{exp} (X) +  50 \cdot (k_1 + k_2)   \\
Y_{exp, + } (k)  &  \; =  \; &  \Lambda_{exp} (g(k)+X) 
\end{eqnarray*}

As the last ingredient of our numerical case studies we need to specify an acceptance criterion in terms of a scalar monetary risk measure. The qualitative results that we discuss below are very similar for any of the choices described in Example~\ref{ex:acceptance}(i). Let us e.g.\ take a shifted (and thus less conservative)  optimized certainty equivalent. Defining $\rho_{\rm{OCE}}$ according to  Example~\ref{ex:acceptance}(i)(c) with $u(x) = \log(x+1)$, we set for suitably integrable random variables $M$:
$$\rho (M) = \rho_{\rm{OCE}} (M) - 10 ,$$
and define acceptability in terms of the acceptance set of the risk measure $\rho$, i.e.
\[\acal = \{M  \; | \; \exists \eta \in \bbr: \eta + E[\log(M - \eta + 1)] \geq -10\}.\]
Figure \ref{fig:agg-full} plots the capital requirements for each firm in group 1 against the capital requirements for each firm in group 2.  Due to the symmetry of the groups the capital allocations are symmetric about the $45^{\circ}$ line. The graph demonstrates several important points.  

First, if the aggregation mechanisms are insensitive to capital levels (-), the minimal capital allocations on the boundary of the systemic risk measurement form a hyperplane -- in two dimensions: a line -- due to the cash-invariance of the scalar risk measure $\rho$. Second, the exponential aggregation insensitive to capital levels ($Y_{exp, -}$) is generally very conservative. In contrast, exponential aggregation sensitive to capital levels ($Y_{exp, +}$) allows for a significant reduction in capital. This effect is reversed if system-wide losses are considered. In this case, aggregation that is sensitive to capital ($Y_{loss, +}$) is more conservative than if insensitive to capital ($Y_{loss, -}$), as can also be observed in Figure \ref{fig:agg-zoom} that plots the part $[0.6,1.2]^2$ and zooms into Figure \ref{fig:agg-full}. This can be understood as follows. For system-wide losses, capital provides essentially a linear relief. However, if capital is added prior to aggregation, a fraction of it might be neglected due to the cut-off at zero. Formally, $Y_{loss,+} \leq Y_{loss, - }$.  This implies that aggregation sensitive to capital levels leads to more  conservative  systemic risk measures than aggregation that is insensitive to it. The exponential aggregation function aggregates  losses instead superlinearly, implying that adding capital prior to aggregation ($Y_{exp, +}$)  provides \emph{more} relief than adding it afterwards ($Y_{exp, -}$), as long as the capital level of each entity is not too low.  

A third point is that for both aggregation mechanisms $Y_{loss,+}$ and $Y_{exp, +}$ the reduction of the capital levels of one group below a certain level cannot be offset by increasing the capital for the other group anymore. For both aggregation functions profits are cut off in the aggregation mechanisms which prevents upside risk from having an impact on the random outcome. This limits the potential for tradeoffs.

\begin{figure}
\centering
\includegraphics[height=0.3\textheight,width=0.6\textwidth]{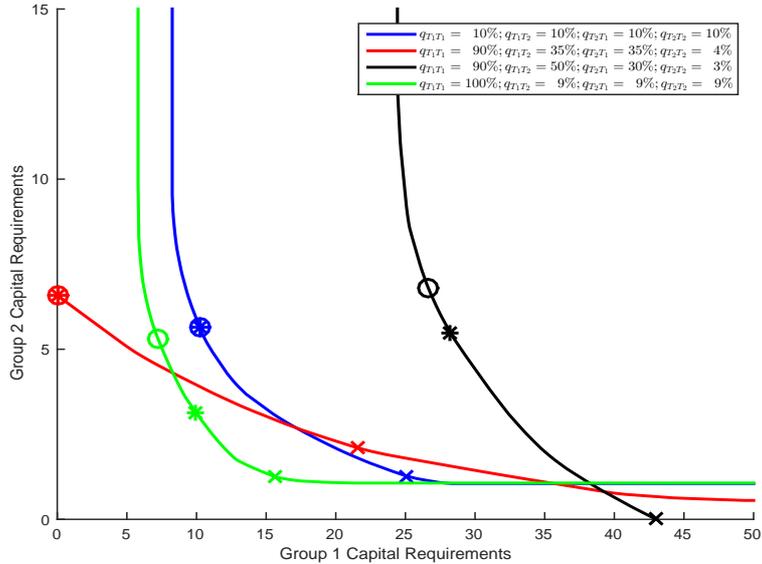}
\caption{Comparison of capital requirements and  EARs  for network scenarios A1-A4 in Example \ref{ex:2tier}}
\label{fig:eisenberg}
\end{figure}

\subsection{Comparative statics for network models}\label{Sec:cs-network}

Another example for which systemic risk measures can be computed are financial network models as originally suggested by \ci{EN01}. As described in Example~\ref{network_ex}(iii), we consider a system of $n+1$ agents
$$ \{0, 1,2,\dots,n\}.$$
The entities $1, 2, \dots, n$ are financial institutions, entity $0$ is the society. The financial institutions are endowed with capital 
$$k =(k_1, k_2, \dots, k_n )^\T\in \bbr^n,$$
and a random number of liquid assets $X= (X_1, X_2, \dots, X_n)^\T $ and illiquid assets $S= (S_1, S_2, \dots, S_n)^\T$. Their value is revealed at time $t=1$ in the one-period economy. Following  \ci*{CFS05}, we assume that agents are connected in a network of bilateral debt agreements that are cleared immediately after time $t=1$. The resulting payments are transfers of the liquid asset. If an agent does not hold a sufficient amount of the liquid asset, she must sell a fraction of the holdings in the illiquid asset. This triggers price impact. A mathematical analysis of a comprehensive model which includes the current example as a special case can be found in \ci{AW_15}.
 
Let us briefly explain the mechanics of the clearing mechanism. An agent may be creditor or obligor to other agents in the network. We denote the nominal liability of
agent $i$ to agent $j$ by $\bar p_{ij} \geq 0$. Setting $\bar p_{ii}=0$,
we define the liability matrix $\bar \Pi = (\bar p_{ij})_{i,j= 0, 1, \dots, n}$.  As a consequence, the total nominal liabilities of agent $i$ are
$$\bar p_i = \sum_{j=0}^n \bar p_{i j} .$$
The total liabilities of agents in the economy are thus encoded by $\bar p= (\bar p_i) \in \bbr^{n+1}_+$.
The relative liabilities are defined as
$$ a_{ij} = \left\{
\begin{array}{ll}
\frac {\bar p_{ij}}{\bar p_i}, & \bar p_i>0,\\
0, & \bar p_i =0.
\end{array}
\right. $$
We set $A= (a_{ij})_{i,j = 0, 1, 2, \dots, n}$ with  $\sum_{j} a_{ij} = 1$ if $\bar p_i > 0$.
The agents $i = 1,2, \dots, n$ may default on their promised payments $\bar p_i$, if sufficient funds are not available; the society, i.e.\ agent $0$, will never default on any of its obligations.

In the event of default, obligations are partially fulfilled. If none of the agents holds the illiquid asset, the clearing
mechanism proceeds as described in \ci{EN01}. If illiquid assets are held and if liquid assets are insufficient to
cover the financial obligations, illiquid assets will be sold in order to obtain the necessary funds. This will trigger a decrease of the price of the illiquid asset, that is captured by an inverse demand function $f: \bbr_+ \to \bbr_{++}$, mapping quantity into price per share. The following assumption guarantees the uniqueness of an equilibrium\footnote{If the equilibrium is not unique, one could instead analyze the largest (or smallest) equilibrium. In the context of the current paper, uniqueness is not a key issue. If endowments are strictly positive, uniqueness can however be proven under Assumption~\ref{ass:idf}. This was shown in \ci*{AFM15unique}.}. 
\begin{assumption}\label{ass:idf}
The inverse demand function $f: \bbr_+ \to \bbr_{++}$ is nonincreasing and $y \in \bbr_+ \mapsto y f(y)$ is a strictly increasing function.
\end{assumption}

The adjusted clearing mechanism is motivated in \ci*{CFS05}. We refer to these papers for a more detailed
discussion of its economic significance.  Let $x,s \in \bbr^n_+$ denote the holdings in the liquid and illiquid asset of the financial institutions, respectively. We compute the \emph{realized payment} or \emph{clearing vector}
$p(x;s)\in \bbr^{n+1}_+$ and implied \emph{clearing price} $\pi(x;s) \in \bbr_+$ of the illiquid asset  as the solution to the following fixed point problem:
\begin{align}
\label{clearing} p_i(x;s)  & = \bar p_i   \;\; \wedge \; \;   \left(\sum_{j=0}^n p_j(x;s) a_{ji} + x_i + \pi(x;s) s_i\right) , \quad\quad i =1,2, \dots, n\\
\label{price} \pi(x;s) & = f\left[\sum_{i=1}^{n} \left(\frac{1}{\pi(x;s)} \left[\bar{p}_i-x_i-\sum_{j=0}^{n} a_{ji}p_j(x;s)\right]^{+}  \;\; \wedge \;\; s_i  \right) \right]
\end{align}
and $p_0(x) = \bar{p}_0$. We denote the unique fixed point of \eqref{clearing} and \eqref{price} by $(p(x;s),\pi(x;s))$. The special case of  \ci{EN01} consists in allocating none of the endowment to the illiquid asset, i.e.\ $s = 0 \in \bbr^n_+$.

For  any agent $i= 0, 1, 2, \dots, n$, the sum of debt and equity is  given by
$$\sum _{j\neq i} p_j(x;s) a_{ji} +x_i +\pi(x;s)s_i ,$$
where we set $x_0 = \bar p_0$ and $s_0 = 0$. If we subtract the promised obligations, we obtain
\begin{equation}\label{eq:el}
 e_i(x;s) = \sum_{j\neq i} p_j(x;s) a_{ji} +x_i +\pi(x;s)s_i - \bar p_i  , \quad i=0,1,2,\dots, n.
\end{equation}

\begin{lemma}
\label{lemmae}
The function $e_0$ defined in equation \eqref{eq:el} is non-decreasing in the first component, continuous, and uniformly bounded from above and below on $\bbr^n_+ \times \bbr^n_+$.
Moreover, $x \mapsto e_0(x;0)$ is concave.
\end{lemma}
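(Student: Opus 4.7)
The plan is to reduce every claim to the clearing payment vector $p(x;s)$. Evaluating \eqref{eq:el} at $i=0$ with $x_0 = \bar p_0$ and $s_0 = 0$ makes the $\bar p_0$, $x_0$, and $\pi s_0$ contributions cancel, leaving
\[
e_0(x;s) = \sum_{j=1}^n p_j(x;s)\, a_{j0}.
\]
Boundedness is then immediate from $0 \le p_j(x;s) \le \bar p_j$ at any fixed point of \eqref{clearing}, which yields $0 \le e_0(x;s) \le \sum_{j=1}^n \bar p_j\, a_{j0}$ uniformly on $\bbr^n_+ \times \bbr^n_+$.

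For monotonicity in $x$ and continuity, I would view \eqref{clearing}--\eqref{price} as a joint fixed-point problem on the compact set $[0,\bar p] \times (0, f(0)]$. The right-hand side $T_{(x,s)}(p,\pi)$ is isotone in $(p,\pi)$: raising $p$ or $\pi$ increases the argument inside the meet in \eqref{clearing}, and in \eqref{price} it lowers the quantity of illiquid assets sold, so $f$ of that quantity rises (since $f$ is non-increasing); it is also non-decreasing in $x$. A standard Tarski--Knaster / monotone iteration argument gives that the unique fixed point $(p(x;s),\pi(x;s))$ is non-decreasing in $x$, hence so is $e_0$. Continuity uses uniqueness (Assumption~\ref{ass:idf}) via a subsequential compactness argument: if $(x^n,s^n) \to (x,s)$ then $(p(x^n,s^n),\pi(x^n,s^n))$ lies in the above compact set (note $\pi \ge f(\cdot) > 0$, so $T$ is continuous in all arguments), every accumulation point is a fixed point of $T_{(x,s)}$, and by uniqueness it must equal $(p(x;s),\pi(x;s))$, so the whole sequence converges.

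The main substantive step is concavity of $x \mapsto e_0(x;0)$. With $s=0$ the price term disappears from \eqref{clearing}, and the system collapses to the pure Eisenberg--Noe recursion
\[
p_i(x;0) = \bar p_i \wedge \Big( x_i + \sum_{j=0}^n a_{ji}\, p_j(x;0) \Big), \qquad i=1,\dots,n,
\]
with $p_0 \equiv \bar p_0$. I would run the downward iteration $p^{(0)}(x) \equiv \bar p$, $p^{(k+1)}_i(x) = \bar p_i \wedge \bigl( x_i + \sum_{j} a_{ji}\, p^{(k)}_j(x) \bigr)$. Because $T$ is isotone and $T(\bar p) \le \bar p$, the sequence $(p^{(k)})$ is pointwise non-increasing and bounded below by $0$, converging to the greatest fixed point, which by uniqueness coincides with $p(\cdot\,;0)$. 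Concavity is preserved along the iteration: $p^{(0)}$ is constant and thus concave, and if each $p^{(k)}_j$ is concave in $x$, then $x_i + \sum_{j} a_{ji}\, p^{(k)}_j(x)$ is concave as a non-negative linear combination of concave functions plus a linear term, and the pointwise minimum with the constant $\bar p_i$ remains concave. Pointwise limits of concave functions on $\bbr^n_+$ are concave, so each $p_j(\cdot\,;0)$ is concave, and $e_0(\cdot\,;0) = \sum_{j=1}^n p_j(\cdot\,;0)\, a_{j0}$ inherits concavity as a non-negative linear combination.

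The main obstacle I expect is the careful verification that the joint fixed-point map is simultaneously isotone in $(p,\pi)$, non-decreasing in $x$, and continuous; once this monotone-and-continuous structure is in hand, the monotonicity, continuity, and concavity claims all follow from standard monotone-iteration and pointwise-limit arguments, and nothing deeper is required.
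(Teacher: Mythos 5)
Your proposal is correct and, for the most part, follows the same route as the paper: monotone fixed-point comparative statics for the monotonicity of $(p,\pi)$ in $x$ (the paper cites a Milgrom--Roberts-type theorem for exactly the isotonicity you verify by hand), and a concavity-preserving iteration whose pointwise limit is identified with $p(\cdot\,;0)$ by uniqueness (the paper iterates upward from $0$ to the least fixed point; you iterate downward from $\bar p$ to the greatest one -- a cosmetic difference once uniqueness is invoked). Your reduction $e_0(x;s)=\sum_{j=1}^n p_j(x;s)\,a_{j0}$ is a clean simplification the paper does not bother to make, and it streamlines the boundedness and concavity steps. The one genuinely different ingredient is continuity: the paper shows that $\operatorname{graph}(p,\pi)$ is closed and that the projection along the compact factor $[0,\bar p]\times[0,f(0)]$ is a closed map (tube lemma), whereas you argue by subsequential compactness plus uniqueness of the fixed point; these are equivalent in substance, and yours is arguably more elementary. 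One small point to tighten: the set $[0,\bar p]\times(0,f(0)]$ you work on is not compact, so for the accumulation-point argument you should note that along a convergent sequence $(x^m,s^m)$ one has $\pi(x^m;s^m)\geq f\bigl(\sum_i s^m_i\bigr)\geq f(C)>0$ for a uniform bound $C$ on $\sum_i s^m_i$, which confines the prices to a genuinely compact interval on which the fixed-point map is continuous; with that fix the argument goes through.
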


In the numerical case studies, we construct the bilateral obligations of the firms via random networks. The probability of a connection from firm $i$ to firm $j$ is denoted by $q_{ij} \in [0,1]$.  We set $q_{i0} = 1$ and $q_{0i} = 0$ for all $i$. The connections are independently sampled.  The size of an obligation attached to a sampled directed link from firm $i$ to firm $j$ is $w_{ij}$, i.e.\ the actual obligation of
$i$ to $j$ equals $w_{ij}$, if there is a link, and $0$ otherwise:  
$$\bar{p}_{ij} = \begin{cases}w_{ij} &\text{with probability
} q_{ij}\\ 0 &\text{with probability } 1-q_{ij}.\end{cases}$$  
Since a firm does not have a liability to itself, we assume that $q_{ii}
= 0$ for all $i$. 

For simplicity, we construct all graphs in a grouped structure. Letting
$(T_j)_{j=1}^l$ be a partition of the set $\{1,2, \dots,n \} $ into $l$ groups,  we assume that the probabilities of connections and the weights depend only on the groups to which firms belong to; the group that contains the society is $T_0 = \{0\}$. With a slight abuse of notion, we write
$$ 
\begin{array}{ccc}
q_{ij}  &= &
q_{T_{(i)}T_{(j)}}
 \\
 w_{ij} &  =  &w_{T_{(i)}T_{(j)}}
\end{array}
\quad\quad\quad\quad\quad
\left(
i \in T_{(i)}, j \in T_{(j)}
\right)
$$ 
where for  firm $i$ its group  is labeled by $(i) \in \{1, 2, \dots, l \} $. 
The special case $l=n$ is a general random graph structure, the case  $l = 1$ is (apart from node $0$) an Erd\H{o}s-R\'{e}nyi random graph.

We assume that capital requirements are equal for all financial firms within the same group. 

\begin{table}
\centering
\begin{tabular}{|ll|c|c|c|c|}
\hline
&& $q_{T_1T_1}$ & $q_{T_1T_2}$ & $q_{T_2T_1}$ & $q_{T_2T_2}$\\
\hline
\multirow{4}{*}{\parbox{3.1cm}{Varying Network Structures}} & Scenario A1 & $10\%$ & $10\%$ & $10\%$ & $10\%$\\
& Scenario A2 & $90\%$ & $35\%$ & $35\%$ & $4\%$\\
& Scenario A3 & $90\%$ & $50\%$ & $30\%$ & $3\%$\\
& Scenario A4 & $100\%$ & $9\%$ & $9\%$ & $9\%$\\
\hline 
\multirow{4}{*}{\parbox{3.1cm}{Fixed Inter-Group\\ Connections}} & Scenario B1 & $60\%$ & $20\%$ & $20\%$ & $30\%$\\
& Scenario B2 & $60\%$ & $20\%$ & $20\%$ & $10\%$\\
& Scenario B3 & $10\%$ & $20\%$ & $20\%$ & $30\%$\\
& Scenario B4 & $10\%$ & $20\%$ & $20\%$ & $10\%$\\ 
\hline
\multirow{6}{*}{\parbox{3.1cm}{Fixed Intra-Group\\ Connections}} & Scenario C1 & $60\%$ & $20\%$ & $20\%$ & $30\%$\\
& Scenario C2 & $60\%$ & $50\%$ & $50\%$ & $30\%$\\
& Scenario C3 & $60\%$ & $30\%$ & $10\%$ & $30\%$\\
& Scenario C4 & $60\%$ & $60\%$ & $40\%$ & $30\%$\\
& Scenario C5 & $60\%$ & $10\%$ & $30\%$ & $30\%$\\
& Scenario C6 & $60\%$ & $40\%$ & $60\%$ & $30\%$\\
\hline
\end{tabular}
\caption{Summary of the comparative scenarios for example~\ref{ex:2tier}}
\label{table:2tier-scenarios}
\end{table}

\begin{figure}
\centering
\includegraphics[height=0.3\textheight,width=0.6\textwidth]{merge_2tier-in_rg.eps}
\caption{Comparison of capital requirements and  EARs  for network scenarios B2 and B4 in Example \ref{ex:2tier}}
\label{fig:eisenberg-changeT1}
\end{figure}

\begin{figure}
\centering
\includegraphics[height=0.3\textheight,width=0.6\textwidth]{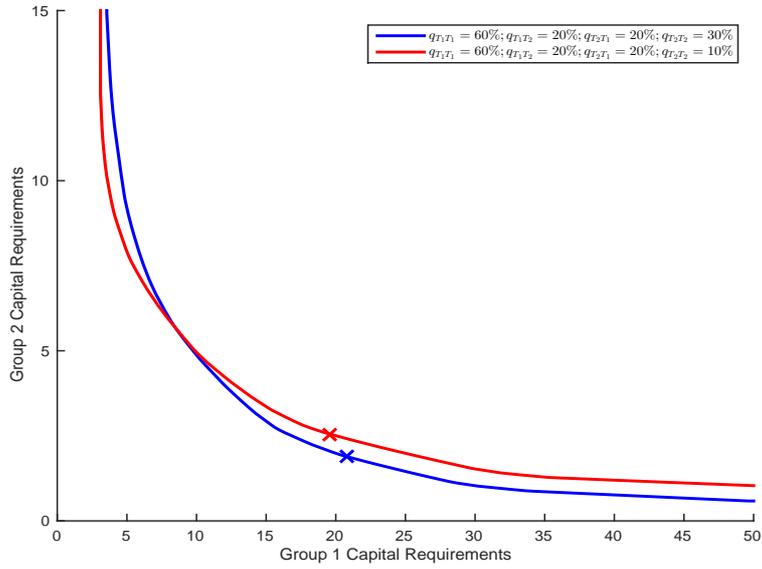}
\caption{Comparison of capital requirements and EARs for network structures B1 and B2 in Example \ref{ex:2tier}}
\label{fig:eisenberg-changeT2}
\end{figure}
\begin{figure}
\centering
\includegraphics[height=0.3\textheight,width=0.6\textwidth]{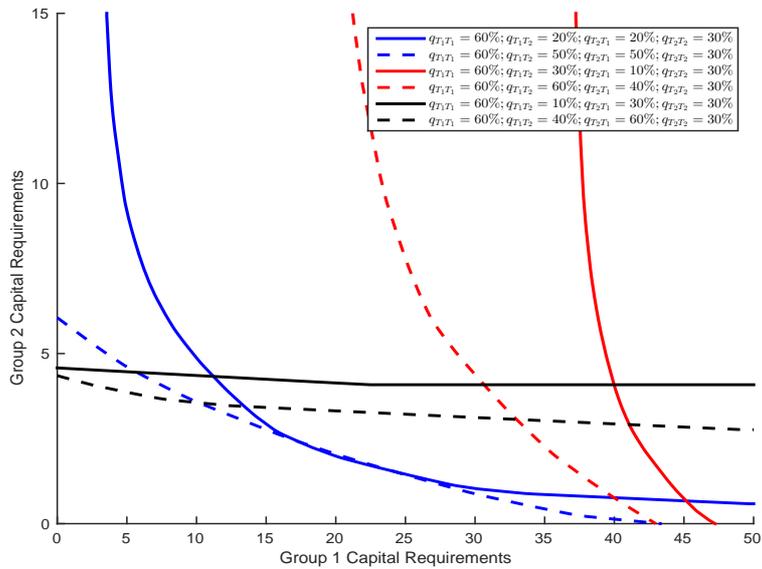}
\caption{Comparison of capital requirements for network structures C1-C6 in Example \ref{ex:2tier}}
\label{fig:eisenberg-changeInter}
\end{figure}
\begin{example}\label{ex:2tier}

As a first example, we consider a network of obligations with two groups of firms. We assume that the firms do not own any illiquid asset. The resulting clearing mechanism is thus exactly as described in  \ci{EN01}. 

Let the total number of firms be $n=100$. We suppose that $T_1$ contains $10$ large firms and $T_2$ the remaining 90 small firms. We compare different probabilities of links between firms which can be interpreted as varying the density of the network of obligations within and between groups. These probabilities are given in Table~\ref{table:2tier-scenarios}.

The following ingredients are fixed in all case studies:
\begin{enumerate}
\item If there are links, large firms owe each other $10$, small firms $1$. Connections between large and small firms are of size $2$.  Additionally, each large firm owes $10$ and each small firms owes $1$ to the society.
\item  The random vector $X = (X_i)_{i=1,2,\dots, 100 }$ of the liquid assets has one-dimensional margins distributed as ${Beta}(2,5)$ and a Gaussian copula with a correlation of $0.5$ between any two firms.
\item  The acceptance criterion is given in terms of a shifted (more conservative and not normalized) average value at risk at level $\lambda = 1\%$. The random payments to the society are acceptable if their ${\rm AV@R}_{\lambda} $ plus $90\%$ of the  payments promised to the society is at most $0$.
\item In order to avoid technical complications in the case of network models, we only consider capital allocations in $\bbr_+^l$ and adjust equation~\eqref{eq:def-risk} accordingly. The construction of the systemic risk measurements and corresponding  EARs  are altered consistently with this constraint. 
\end{enumerate}

\begin{table}
\centering
\begin{tabular}{|cr|c|c|c|}
\hline
&& \multicolumn{3}{|c|}{$w(Y)$} \\ \cline{3-5}
&& $(1,1)$ & $(10,90)$ & $(\frac{1}{10\bar p_{T_1}} , \frac{1}{90\bar p_{T_2})}$ \\ \hline
Scenario A1 & $\begin{array}{r}\text{Group 1} \\ \text{Group 2}\end{array}$ & $\begin{array}{r@{.}l}10&2102 \\ 5&6607\end{array}$ & $\begin{array}{r@{.}l}25&0751 \\ 1&2763\end{array}$ & $\begin{array}{r@{.}l}10&2102 \\ 5&6607\end{array}$ \\  \hline
Scenario A2 & $\begin{array}{r}\text{Group 1} \\ \text{Group 2}\end{array}$ & $\begin{array}{r@{.}l}\;\,0&0000 \\ 6&5831\end{array}$ & $\begin{array}{r@{.}l}21&5716 \\ 2&1021\end{array}$ & $\begin{array}{r@{.}l}\;\,0&0000 \\ 6&5831\end{array}$ \\ \hline
Scenario A3 & $\begin{array}{r}\text{Group 1} \\ \text{Group 2}\end{array}$ & $\begin{array}{r@{.}l}26&6266 \\ 6&8168\end{array}$ & $\begin{array}{r@{.}l}43&0180 \\ 0&0000\end{array}$ & $\begin{array}{r@{.}l}28&2282 \\ 5&4805\end{array}$ \\ \hline
Scenario A4 & $\begin{array}{r}\text{Group 1} \\ \text{Group 2}\end{array}$ & $\begin{array}{r@{.}l}\;\,7&2573 \\ 5&2853\end{array}$ & $\begin{array}{r@{.}l}15&6657 \\ 1&2462\end{array}$ & $\begin{array}{r@{.}l}\;\,9&9600 \\ 3&1381\end{array}$ \\ \hline
\end{tabular}
\caption{Value of  EARs  for network scenarios A1-A4 in Example \ref{ex:2tier}}
\label{table:eisenberg}
\end{table}

Figure~\ref{fig:eisenberg} displays the capital requirements for both groups, presented by the (inner approximation) of the efficient frontier of the set under network scenarios A1-A4 described in Table~\ref{table:2tier-scenarios}.  It also shows the value of three different  EARs  which are  listed in Table~\ref{table:eisenberg}. 
As expected, the figure demonstrates that systemic risk measurements  are very sensitive to the levels of interconnectedness between the groups of banks and indicate that this relationship is complex.

In order to better understand the underlying relationships we look at 10 further network structure described in Table~\ref{table:2tier-scenarios}. We first focus on the situation in which the network structure between the two groups is fixed and vary the network connectivity within the groups. In Table~\ref{table:2tier-scenarios} the corresponding parameters are labelled ``Fixed Inter-Group Connections''. 

Figure~\ref{fig:eisenberg-changeT1} displays what happens if the connectivity is changed only within the group of large firms. Capital requirements increase if the connectivity within the group of large firms is increased, although the fundamental data of the economy are not modified. This observation does not depend on the capital levels of the small firms. We conducted several case studies with other parameters than those in the current case study and found that this effect is very robust within the current model setup. Despite the fact that increasing the connectivity within the group of large firms leads to uniformly stricter capital requirements, the  EAR  that minimizes the total regulatory capital offers capital relief to the responsible large firms at the expense of larger capital requirements of the small firms. This indicates that scalar systemic risk measures -- in contrast to set-valued systemic risk measures -- can actually be misleading.

If the connectivity within the group of large firms is fixed, but the connectivity of the small firms is varied, the direction of the effect depends on the size of the capital endowment of the large firms, as shown in  Figure~\ref{fig:eisenberg-changeT2}. Also this observation seems to be robust within the current model setting. If large firms hold a large amount of capital, increasing the connectivity between the small firms decreases the capital requirements. This can be interpreted as spreading solvency throughout the network. If, however, large firms hold only small amounts of capital, increasing the connectivity between the small firms increases capital requirements. Systemic risk is higher indicating that the network contributes to stronger default cascades in this case. The  EARs  that minimize total capital are in the example in the region where large firms hold a large amount of capital. If the small firms increase their connectivity, the small firms benefit from this  EAR  at the expense of the large firms by more than the induced capital reduction that would occur if the capital of the large firms was held constant.  

Finally, we investigate the impact of changing the connectivity between the firms while leaving the connectivity within the groups constant. In Table~\ref{table:2tier-scenarios} the corresponding parameters are labelled ``Fixed Intra-Group Connections''. The systemic risk measures for these six scenarios are displayed in Figure~\ref{fig:eisenberg-changeInter}. Scenarios 1 and 2 correspond to no net payments between the groups. Nevertheless, the obligations have an impact on systemic risk. Increasing the connectivity between the groups reduces systemic risk. In scenarios 3 and 4 the large firms are net borrowers; the total net amount is equal for both scenarios. Increasing connectivity between the groups again reduces systemic risk. In scenarios 5 and 6 the small firms are net borrowers, and again increasing the connectivity between the groups is beneficial in the current model framework.

\end{example}

\begin{example}\label{ex:3tier}

\begin{figure}[p]
\centering
\begin{minipage}{\textwidth}
\centering
\includegraphics[height=0.3\textheight,width=0.6\textwidth]{merge_3dim.eps}
\caption{Comparison of capital requirements for all three groups of firms (contours on levels for group 3) in Example~\ref{ex:3tier}}
\label{fig:3tier-cifuentes-3d}
\end{minipage}
\begin{minipage}{\textwidth}
~\\[1cm]
\end{minipage}
\begin{minipage}{\textwidth}
\centering
\includegraphics[height=0.3\textheight,width=0.6\textwidth]{merge_ubsr2.eps}
\caption{Comparison of capital requirements for different fractions of net worth in the liquid asset in Example \ref{ex:3tier}}
\label{fig:3tier-cifuentes}
\end{minipage}
\end{figure}

The second case study considers a tiered graph with 3 groups of firms in a network model with both local and global interactions as described in \ci*{CFS05}, see also \ci{AW_15}.  We vary the fraction of endowment that is held in the illiquid asset, and keep all other parameters constant. We assume that in total the financial system consists of 300 firms, 10 in group $T_1$, 90 in group $T_2$ and 200 in group $T_3$. 
\begin{enumerate}
\item The probabilities for connections in the network of obligations and the weights assigned to them are fixed in this case study, see Table~\ref{table:3tier-network} for details. 
\begin{table}
\centering
\begin{tabular}{|cc|ccc|c|}
\hline
&& \multicolumn{4}{|c|}{To}\\
& $(q,w)$ & $T_1$ & $T_2$ & $T_3$ & $0$\\ \hline
\multirow{3}{*}{\rotatebox[origin=c]{90}{From}} & $T_1$ & $(100\%,\frac{10}{480})$ & $(80\%,\frac{3}{480})$ & $(60\%,\frac{3}{480})$ & $(100\%,\frac{10}{480})$ \\
& $T_2$ & $(60\%,\frac{3}{480})$ & $(40\%,\frac{2}{480})$ & $(20\%,\frac{2}{480})$ & $(100\%,\frac{2}{480})$\\
& $T_3$ & $(20\%,\frac{1}{480})$ & $(5\%,\frac{1}{480})$ & $(0\%,\frac{1}{480})$ & $(100\%,\frac{1}{480})$\\
\hline
\end{tabular}
\caption{Summary of the network topology for example~\ref{ex:3tier}}
\label{table:3tier-network}
\end{table}
\item The inverse demand function of the illiquid asset is given by
\[
f(x) := \begin{cases} 1 - \frac{2}{3}x & \text{if } x \leq \frac{1}{2} \\ \frac{\sqrt{2}}{3\sqrt{x}} & \text{if } x \geq \frac{1}{2} \end{cases}.
\]
Observe that this function satisfies Assumption~\ref{ass:idf} that guarantees uniqueness.
\item The random vectors $X = (X_i)_{i=1,2,\dots, n }$  and $S = (S_i)_{i=1,2,\dots, n }$ specifying the amount of the liquid and illiquid asset that is held by the agents is given by 
$$ X = \alpha \cdot N, \quad S = (1-\alpha) \cdot N  $$
for some random vector $N = (N_i)_{i=1,2,\dots, n}$ with 
$$\alpha \in \{0\%,20\%,40\%,60\%,80\%,100\%\}. $$
Observe that, since $f(0) =1$, the initial mark-to-market value of these assets is always given by the vector $N$, for any $\alpha$.

The distribution of $N$ is as follows. Dependence is specified via a Gaussian copula with a correlation of $0.5$ between any two firms. The marginal distributions depend on the groups to which the firms belong:
$$
\begin{array}{cccl}
N_i & \sim & \frac{350}{500}Beta(2,5) + \frac{100}{500}, &\quad\quad i\in T_1,\\
&&&\\
N_i &\sim & \frac{70}{500}Beta(2,5) + \frac{20}{500}, &\quad\quad i\in T_2,\\
&&&\\
N_i & \sim  & \frac{1.75}{500}Beta(2,5)+\frac{0.5}{500}, &\quad\quad i\in T_3.
\end{array}
$$
The firms in $T_1$, $T_2$, and $T_3$ are thus interpreted as large, medium-sized, and small, respectively. 
\item 
We assume that the acceptance criterion is formulated in terms of a shifted entropic risk measure, a special case of a utility-based risk measure (and, at the same time, an OCE-risk measure). More specifically, we choose the acceptance set
$$\acal = \{M \in L^0(\Omega; \bbr) \; | \; E[\exp (-M)] \leq \exp(-0.9) \}.$$
\item As in the previous example, we consider only capital allocations in $\bbr_+^l$.
\end{enumerate}

We investigate the resulting systemic risk measurements along two dimensions.   First we set the fraction of the initial endowment in the liquid asset to $\alpha = 60\%$ and compare the capital allocations for the large and mid-sized firms when varying the capital requirement for the small firms.  Figure~\ref{fig:3tier-cifuentes-3d}  depicts the contour map of the capital requirements for each group of firms.  This graph shows that in the chosen network model the capital requirements for the large and mid-sized firms are insensitive to the capital holdings of the small firms: all the contour lines are stacked on top of each other.  This observation provides a first indication on the design of regulatory rules for firms of different size which is consistent with current practice, as e.g.\ documented in \ci{dodd-frank}.

Second, due to the insensitivity to the capital requirements for small firms, we set the capital levels  of the small firms equal to zero in the next case study. We vary the portfolio composition by considering different values of $\alpha$, the percentage of the endowment in liquid assets.  We will choose $\alpha \in \{0\%,20\%,40\%,60\%,80\%,100\%\}.$  The minimal capital allocations for the large and mid-sized firms are plotted in Figure~\ref{fig:3tier-cifuentes}.  

The key finding is that when the fraction of the endowment invested in the liquid asset grows, the capital requirement for each firm shrinks as evidenced by the lower minimal capital allocations.  This is as expected, since the liquid assets have a constant value whereas the illiquid assets have a value that decreases as a function of the number of shares  sold.  In fact, due to the decreasing price of the illiquid asset as more shares are liquidated, we observe that the spacing between each scenario increases as the proportion invested in the liquid asset shrinks.  This is due to fact that the price impact becomes larger, the more illiquid assets need to be liquidated.

\end{example}

\section{Conclusion}
We proposed a novel systemic risk measure. On the basis of a model of the relevant stochastic outcomes  (CVM)  and an acceptability criterion, the systemic risk measures were defined as the set of capital allocations that achieve an acceptable outcome if added to the current capital. The suggested systemic risk measures can explicitly be computed. An algorithm was presented in the paper. Moreover, we introduced a new concept, called  EARs,  that admits a simple application of systemic risk measures to capital regulation in practice.  EARs  were characterized as minimizers of the cost of capital on a global level. We demonstrated that important recent contributions in the literature on systemic risk measures can be embedded into our framework. Finally, we illustrated in numerical case studies that the suggested risk measures are a powerful instrument for analyzing systemic risk.

\bibliography{bibtex2}
\bibliographystyle{jmr}
\vfill
\pagebreak

\appendix

\section{Proofs}\label{proofs}

\begin{proof}[Proof of Lemma~\ref{Lem:ci-m}]  These properties are a direct consequence of properties of $Y$ and $\acal$.
(i) follows immediately. The proof of (ii) uses property (ii) of the scalar acceptance set $\acal$.
Property (iii) follows from the fact that $Y$ is continuous and $\acal$ is closed.
\end{proof}

\begin{proof}[Proof of Lemma~\ref{Lem:conv-set}]
Let $m_1,m_2\in R(Y;k)$ for some $Y\in\ycal$ and $k\in\bbr^l$, that is, $Y_{k+m_1},Y_{k+m_2}\in\mathcal A$. Thus, due to the convexity of $\mathcal A$, the concavity of $k \mapsto Y_k$ and property (ii) of any acceptance set $\mathcal A$, one obtains for $\ga\in[0,1]$ that
$
\mathcal A\ni \ga Y_{k+m_1} + (1-\ga)Y_{k+m_2}\leq Y_{k+\ga m_1 + (1-\ga)m_2}$, implying that  $Y_{k+\ga m_1 + (1-\ga)m_2} \in \mathcal A.
$
Hence, $\ga m_1 + (1-\ga)m_2\in R(Y;k)$.
\end{proof}

\begin{proof}[Proof of Lemma~\ref{Propo:rm}]
This is a direct consequence of Lemma \ref{Lem:ci-m}(ii), the convexity of the scalar acceptance set $\acal$, and property \eqref{mixture} of the random fields $Y^i$ and $Y^j$.
\end{proof}

\begin{proof}[Proof of Proposition \ref{Propo:ortho-conv}]
Let $\bar{k}^*$ be any single-valued  EAR  corresponding to the systemic risk measure $R$. In particular, $\bar{k}^*(Y;k) \in \bbr^l$ for $Y \in \ycal$ and $k \in \bbr^l$.  By Lemma \ref{Propo:rm} and property \eqref{mixture} of the random fields $Y^i$ and $Y^j$, we have that $\bar{k}^*(Y^i;0) \vee \bar{k}^*(Y^j;0) \in R(D_\alpha(i,j);0)$.  

We will now construct a single-valued  EAR  $k^*$ from $\bar{k}^*$ which is $D$-quasi-convex at $(i,j)$.  Choosing $k^*(D_\alpha(i,j);0) \in [\bar{k}^*(Y^i;0) \vee k^*(Y^j;0) - \bbr^l_+] \cap \Min R(D_\alpha(i,j);0) \neq \emptyset$,   we define the  EAR  $k^*$ as follows:
\[k^*(Y;k) := \begin{cases} k^*(D_\alpha(i,j);0) - (k+m), & \text{if } {Y_\bullet = D_\alpha(i,j)_{m+\bullet} \text{ for }m\in\bbr^l  ,}\\ \bar{k}^*(Y;k), & \text{otherwise,}\end{cases} \]
for $ k \in \bbr^l$, $ Y \in \ycal$.
\end{proof}

\begin{proof}[Proof of Lemma \ref{thm:scalarization}]
First, observe that $\hat k(Y;k)$, as defined in \eqref{argminlemma}, is an  EAR;  indeed,  Definition~\ref{orthant}(i) is satisfied by Proposition~\ref{Propo:min} below, (ii) is true since $R$ has convex images, and (iii) is satisfied because $w: \ycal \to \bbr^l_{++}$ is independent of $k\in \bbr^l$.

Now, let $k^*(Y;k)$ be an arbitrary  EAR.  By Proposition~\ref{Propo:min}, any minimal element $k^* \in \Min R(Y;k)$ is a solution of an optimization problem that minimizes $\sum_{i = 1}^l w_i m_i $ over $ \; m \in R(Y;k)$ for some $w \in \bbr^l_{++} \cap {\rm recc}R(Y;k)^+$. 
Because of cash-invariance of systemic risk measures, we know that
$${\rm recc}R(Y;k) = {\rm recc}R(Y;0) \quad (\forall k \in \bbr^l, \; \forall Y \in \ycal).$$
Since $k^*(Y;k)$ is convex-valued, there exists $w(Y;k) \in \bbr^l_{++} \cap {\rm recc}R(Y;0)^+ $ such that
\begin{equation}
\label{eqproof}
k^*(Y;k) \subseteq \argmin\{w(Y;k)^{\T}m \; | \; m \in R(Y;k)\}.
\end{equation}
Finally, we will show that these $w(Y;k)$ can be chosen independently of $k \in \bbr^l$, i.e.\ $w(Y) = w(Y;k)$ for every $k \in \bbr^l$.
Letting $Y \in \ycal$ and $k^1,k^2 \in \bbr^l$, 
cash-invariance of $R(Y;\cdot)$ in the second component yields
\begin{align*}
k^*(Y;k^1) &\subseteq \argmin\{w(Y;k^1)^{\T}m \; | \; m \in R(Y;k^1)\}\\
&= \argmin\{w(Y;k^1)^{\T}m \; | \; m \in R(Y;k^2) + (k^2 - k^1)\}\\
&= \argmin\{w(Y;k^1)^{\T}m \; | \; m \in R(Y;k^2)\} + (k^2 - k^1).
\end{align*}
By cash-invariance of $k^*$, \eqref{eqproof} is satisfied if one picks $w(Y) := w(Y;k^1)=w(Y;k^2)$.
\end{proof}

\begin{proposition}\label{Propo:min}
Let $R(Y;k)$ be non-empty, closed and convex, and assume that  ${\rm recc}R(Y;k) \cap -\bbr^l_+ = \{0\}$. Then $k^* \in \Min R(Y;k)$, if and only if it is a minimizer of $w^\T m $ over $m \in R(Y;k)$ for some $w \in \bbr^l_{++} \cap {\rm recc}R(Y;k)^+$.
\end{proposition}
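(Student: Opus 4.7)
My plan is to split the claim into the two implications, treating sufficiency first and then necessity. For sufficiency, I would fix $w \in \bbr^l_{++}$ with $k^*$ minimizing $w^\T m$ over $R(Y;k)$. If $m \in R(Y;k)$ with $k^* - m \in \bbr^l_+$, then strict positivity of $w$ forces $w^\T(k^* - m) \geq 0$, while optimality of $k^*$ gives $w^\T(k^*-m) \le 0$; the resulting equality combined with $w \in \bbr^l_{++}$ and $k^* - m \in \bbr^l_+$ pins down $m = k^*$, so $k^* \in \Min R(Y;k)$. The recession-cone membership of $w$ plays no role here.

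For the necessity direction, I set $C := R(Y;k)$ and let $K$ denote the closed convex cone generated by $C-k^*$. Since $C$ is an upper set, $\bbr^l_+ \subseteq C-k^*$, so $\bbr^l_+ \subseteq K$; using the standard ``limit of scaled sequences'' characterization of the recession cone one also obtains ${\rm recc}(C) \subseteq K$. The plan is to prove
\[
K \cap (-\bbr^l_+) = \{0\},
\]
and then separate the compact sphere slice of $K$ from $-\bbr^l_+$ to read off the conclusion. A nonempty compact set $K \cap \{x : \|x\|=1\}$ disjoint from the closed convex cone $-\bbr^l_+$ admits a strong separator $w \in \bbr^l \setminus\{0\}$ with $w^\T x \le 0$ on $-\bbr^l_+$ (forced since $-\bbr^l_+$ is a cone containing $0$) and $w^\T y \ge \beta > 0$ on the slice. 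The first inequality yields $w \in \bbr^l_+$; applying the second to each standard basis vector, which lies in $\bbr^l_+ \subseteq K$, gives $w_i \ge \beta > 0$, so $w \in \bbr^l_{++}$. Positive homogeneity then produces $w^\T x \ge 0$ for every $x \in K$, so that $k^* \in \argmin\{w^\T m : m \in C\}$, and the inclusion ${\rm recc}(C) \subseteq K$ gives $w \in K^+ \subseteq {\rm recc}(C)^+$, completing the required membership.

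The main obstacle will be the claim $K \cap (-\bbr^l_+) = \{0\}$: the minimality of $k^*$ only delivers $(C-k^*) \cap (-\bbr^l_+) = \{0\}$, and one must rule out that taking the closed conic hull introduces spurious directions inside $-\bbr^l_+$. A nonzero $d \in K$ can be written as $d = \lim t_n (c_n - k^*)$ with $t_n \ge 0$, $c_n \in C$, and I will dispose of $d \in -\bbr^l_+$ by a case split. If a subsequence of $t_n$ converges to some $t > 0$, then $c_n \to k^* + d/t$, closedness of $C$ puts $d/t$ into $(C-k^*) \cap (-\bbr^l_+) = \{0\}$, and $d = 0$. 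Otherwise one extracts a subsequence along which $\|c_n - k^*\| \to \infty$ and $(c_n - k^*)/\|c_n - k^*\| \to d/\|d\|$; this limit direction is a classical recession direction of $C$, contradicting the standing hypothesis ${\rm recc}(C) \cap (-\bbr^l_+) = \{0\}$. Merging these two regimes is the crux of the argument; once the intersection identity is in hand, everything else is standard dual-cone mechanics for efficient points of convex sets.
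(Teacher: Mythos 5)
Your sufficiency argument is correct and coincides with the paper's (which simply declares it evident). The necessity direction, however, breaks down exactly at the step you identified as the crux: the dichotomy you propose for proving $K \cap (-\bbr^l_+) = \{0\}$ is not exhaustive. Writing $d = \lim_n t_n(c_n - k^*)$ with $d \neq 0$, there is a third regime, $t_n \to \infty$, in which $c_n - k^* \to 0$; there $\|c_n - k^*\|$ tends to $0$, not to $\infty$, so neither the closedness of $C$ nor the recession-cone hypothesis applies. This regime is not vacuous, and the identity $K \cap (-\bbr^l_+) = \{0\}$ is actually false under the stated hypotheses. Take $l=2$, $k^* = 0$ and $C = \{(x,y) \in \bbr^2 : y \geq f(x)\}$ with $f(x) = (\max(-x,0))^2$ for $x \geq -1$ and $f(x) = -2x-1$ for $x \leq -1$. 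This is a nonempty closed convex upper set, ${\rm recc}\,C = \{(u,v): v \geq 0,\, v \geq -2u\}$ meets $-\bbr^2_+$ only at the origin, and $0 \in \Min C$ because $f>0$ on $(-\infty,0)$. Yet $(-1,0) = \lim_n n\cdot(-1/n,\,1/n^2)$ lies in $K \cap (-\bbr^2_+)$, and the witnessing scalars are precisely $t_n = n \to \infty$ with $c_n \to k^*$.

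The same example shows the obstruction is not a repairable technicality of your route: the normal cone of $C$ at $0$ is the ray spanned by $(0,1)$, so no $w \in \bbr^2_{++}$ minimizes $w^\T m$ over $C$ at $0$ even though $0$ is minimal. This is the classical gap between efficient and \emph{properly} efficient points of a convex set, and the recession-cone condition does not close it; consequently no argument via the closed conic hull of $C - k^*$ can succeed without additional hypotheses. (A secondary issue: even granting $K \cap (-\bbr^l_+)=\{0\}$, strong separation of the compact slice $K \cap \{\|x\|=1\}$ from $-\bbr^l_+$ requires the convex hull of that slice to avoid $-\bbr^l_+$, which fails when $K$ contains a line, e.g.\ when $K$ is a half-space.) The paper's own proof takes a different and less elementary route: it never forms the conic hull, but applies a nonlinear separation lemma of Hamel to the shifted set $R(Y;k) - k^* + \epsilon$, $\epsilon \in \bbr^l_{++}$, versus $-\bbr^l_+$, and extracts a single $w$ uniform in $\epsilon$. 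You should examine the hypotheses of that lemma against the example above, since it indicates that some form of proper minimality is genuinely needed for the ``only if'' implication.
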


\begin{proof}
First, it is evident that if $k^* \in \bbr^l$ is a minimizer of $w^\T m $ over $ m \in R(Y;k)$ for some $w \in \bbr^l_{++}$, then $k^* \in \Min R(Y;k)$.

Conversely, assume that $k^* \in \Min R(Y;k)$.  By Lemma 2.2(ii) in \ci{hamel04lvo}, it follows that there exists a $w \in \bbr^l \backslash \{0\}$ such that
\[w^\T(m - k^* + \epsilon) > 0 > w^\T(-r) \quad (\forall m \in R(Y;k),\; \forall \epsilon \in \bbr^l_{++}, \; \forall r \in \bbr^l_+ \backslash \{0\}) , \]
since $(R(Y;k) - k^* + \epsilon) \cap -\bbr^l_+ = \emptyset$ and ${\rm recc}[R(Y;k) - k^* + \epsilon] \cap -\bbr^l_+ = {\rm recc}R(Y;k) \cap -\bbr^l_+ = \{0\}$.  Since $w^\T r > 0$ for all $r \in \bbr^l_+ \backslash \{0\}$, $w$ is an element of the quasi-interior of $\bbr^l_+$, i.e., $w \in \bbr^l_{++}$.  We also obtain $w^\T m > w^\T k^* - \delta$ for every $m \in R(Y;k)$ and $\delta \in \bbr_{++}$, because $\{w^\T \epsilon \; | \; \epsilon \in \bbr^l_{++}\} = \bbr_{++}$.  This implies that $w^\T m \geq w^\T k^*>-\infty$ for every $m \in R(Y;k)$, i.e., $k^*$ is the minimizer of $w^\T m$ over  $m \in R(Y;k)$, and $w$ must be an element of ${\rm recc}R(Y;k)^+$.
\end{proof}

\begin{proof}[Proof of Lemma~\ref{lemmae}] 
First, we observe that the function $\phi: [0,\bar p] \times [0,f(0)] \times \bbr^n_+ \times \bbr^n_+ \to [0,\bar p] \times [0,f(0)]$, defined by
\begin{align*}
\phi_i(\hat{p},\hat{\pi},x,s) &= \bar p_i \wedge \left( x_i + \hat{\pi}s_i + \sum_{j = 1}^n a_{ji} \hat{p}_j \right), \quad\quad i = 1,\dots,n, \\
\phi_{n+1}(\hat{p},\hat{\pi},x,s) &= f\left( \sum_{i = 1}^n \frac{(\bar{p} - x_i - \sum_{j = 1}^n a_{ji} \hat{p_j})^+}{\hat{\pi}} \wedge s_i \right),
\end{align*}
is continuous, and $\phi^s := \phi(\cdot,\cdot,\cdot,s)$ is non-decreasing.

\begin{enumerate}
\item Because $\phi^s$ is non-decreasing (for any $s \in \bbr^n_+$), $p$ and $\pi$ are non-decreasing in their first argument by Theorem 3 in \ci{MR94}.  Therefore, the  function $e: \bbr^n_+ \times \bbr^n_+ \to \bbr^{n+1}_+$, and in particular its component $e_0$, is non-decreasing in its first argument.

\item Since $\phi$ is a continuous function and $(p,\pi)$ is the unique fixed point of $\phi$ under our assumptions (see \ci{AFM15unique}, \ci{fei15}), it follows by Proposition~\ref{prop:cont-graph} below that 
$$\operatorname{graph}(p,\pi) = \{(x,s,\hat{p},\hat{\pi}) \in \bbr^n_+ \times \bbr^n_+ \times [0,\bar p] \times [0,f(0)]\; | \; \phi(\hat{p},\hat{\pi},x,s) = (\hat{p},\hat{\pi})\}$$
is closed in the product topology.  Let $\Psi: \bbr^n_+ \times \bbr^n_+ \times [0,\bar p] \times
[0,f(0)] \to \bbr^n_+ \times \bbr^n_+$ be the projection $\Psi(x,s,\hat{p},\hat{\pi}) = (x,s)$; this is a closed mapping
from the product topology (Proposition \ref{prop:proj-closed} below).  In order to show that $(p,\pi)$ is continuous, take $B \subseteq [0,\bar p] \times [0,f(0)]$ closed. Then
\begin{align*}
(p,\pi)^{-1}[B] &= \{(x,s) \in \bbr^n_+ \times \bbr^n_+\; | \; (p(x;s),\pi(x;s)) \in B\} 
= \Psi(\operatorname{graph}(p,\pi) \cap [\bbr^n_+ \times \bbr^n_+ \times B])
\end{align*}
is closed.  Hence, $p$ and $\pi$ are continuous functions.  Thus, the  function $e$, and in particular its component $e_0$, is continuous.

\item The function $e_0$ is bounded from below on $\bbr^n_+ \times \bbr^n_+$, since $p(x;s) \geq 0$ for every $x,s \in \bbr^n_+$. It is bounded from above, because $p(x;s) \leq \bar{p}$ by definition.

\item For this part, take $s = 0 \in \bbr^n$.  In this situation, we can ignore both $\pi$ from \eqref{price} in the
 function $e$ and the last component $\phi_{n+1}$ of the function $\phi$.

Consider the function $\phi:[0,\bar p]\times\bbr^n_+\rightarrow[0,\bar p]$ in \eqref{clearing}, defined by $\phi(\hat{p},x):= \bar p   \wedge   (A^{\T} \hat{p} +
x)$, and its fixed point $p(x)\in\bbr^n_+$. 
The function $\phi$ is non-decreasing in $\hat{p}$ and $x$, and jointly concave in $\hat{p}$ and $x$, as it is a
composition of a non-decreasing affine map ($(\hat{p},x) \mapsto A^{\T} \hat{p} + x$) and a non-decreasing concave map ($q \mapsto q \wedge \bar p$). As shown above, $p$ is non-decreasing in $x$. To show that $p$ is concave, define a sequence of functions $\{p_n(x)_{n=0}^\infty\}$ inductively as follows:
\[
p_n(x)=\phi(p_{n-1}(x),x),\quad p_0(x)=0\in\bbr^n.
\]
For each fixed $x\in\bbr^n_{+}$, the term $p_n(x)$ is the $n^{th}$ iteration of the map $\hat{p}\mapsto \phi(\hat{p},x)$ starting at $0$. Standard results on
the convergence of iterates of non-decreasing operators show that $p_n(x)\uparrow p(x)$, for all $x$. By induction, since  $\phi$ is
non-decreasing and jointly concave in $p$ and $x$, $p_n$ is concave for all $n\geq 0$. Then, $p$ is concave as the pointwise limit of concave
functions. The function $e: \bbr^n_{+}\rightarrow \bbr^n$ defined by $e(x):=A^{\T} p(x) + x-\bar p$ is the sum of a
non-decreasing affine function ($x \mapsto x-\bar p$) and a composition of a non-decreasing concave function ($x \mapsto p(x)$) and a
non-decreasing affine function ($q \mapsto A^{\T} q$). Thus, it is concave and non-decreasing. 
\end{enumerate}
\end{proof}

\begin{proposition}\label{prop:cont-graph}
Let $\bbx$ and $\bby$ be two Hausdorff topological linear spaces. Let $H: \bbx \times \bby \to \bby$ be a continuous function with unique fixed point $h(x)\in \bby$ for every $x \in \bbx$, i.e.\ $H(x, h(x)) = h(x)$ and for each $x\in \bbx$ the fixed point $h(x)\in \bby$ is unique.
Then, $\operatorname{graph}h$ is
closed in the product topology.
\end{proposition}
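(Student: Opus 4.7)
The plan is to rewrite the graph of $h$ as the preimage of a closed set under a continuous map, exploiting the \emph{uniqueness} of the fixed point to identify the graph with the fixed-point set of $H(x,\cdot)$ as $x$ varies.

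First I would observe that, by uniqueness of the fixed point for each $x \in \bbx$, one has the identity
\[
\operatorname{graph} h \;=\; \{(x,y) \in \bbx \times \bby \;|\; H(x,y) = y\}.
\]
Indeed, $\supseteq$ holds because any $(x,y)$ with $H(x,y)=y$ is a fixed point of $H(x,\cdot)$ and hence $y = h(x)$ by uniqueness; $\subseteq$ holds by the defining property $H(x,h(x))=h(x)$.

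Next I would introduce the auxiliary map $\Phi: \bbx \times \bby \to \bby \times \bby$ defined by $\Phi(x,y) = (H(x,y), y)$. This map is continuous, since $H$ is continuous by assumption and the second component is the continuous projection. The set on the right-hand side above is precisely $\Phi^{-1}(\Delta_{\bby})$, where $\Delta_{\bby} = \{(y,y) : y \in \bby\}$ denotes the diagonal of $\bby \times \bby$.

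Since $\bby$ is a Hausdorff topological space, the diagonal $\Delta_{\bby}$ is a closed subset of $\bby \times \bby$ (this is a standard characterization of the Hausdorff property). Consequently, $\Phi^{-1}(\Delta_{\bby})$ is closed in $\bbx \times \bby$ by continuity of $\Phi$, and the identification in the first step yields that $\operatorname{graph} h$ is closed in the product topology, as desired. No step here is a real obstacle; the only subtle ingredient is the use of uniqueness to pass from ``graph of a selector'' to ``zero set of the continuous function $H(x,y)-y$'', and the observation that $\bby$ being Hausdorff renders the diagonal closed.
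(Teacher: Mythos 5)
Your proof is correct and takes essentially the same route as the paper's: both identify $\operatorname{graph} h$ with the set $\{(x,y)\;|\;H(x,y)=y\}$ via uniqueness of the fixed point, and then deduce closedness from the continuity of $H$ together with the Hausdorff property of $\bby$. The paper phrases the final step with a convergent net and the linear structure ($H(x,y)-y=\lim_j[H(x_j,y_j)-y_j]=0$), whereas you phrase it as the preimage of the closed diagonal under $(x,y)\mapsto(H(x,y),y)$; these are interchangeable formulations of the same argument (yours, incidentally, does not need the linear structure of $\bby$).
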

\begin{proof}
Let $(x_j,y_j)_{j \in J} \to (x,y)$ be a converging net such that $h(x_j) = y_j$ for every $j \in J$.  By continuity of $H$, it follows that $H(x,y) - y = \lim_j [H(x_j,y_j) - y_j] = 0$, i.e.\ $h(x) = y$.
\end{proof}

\begin{proposition}\label{prop:proj-closed}
Let $\Psi: \bbx \times \bby \to \bbx$ be the projection $\Psi(x,y) = x$ for every $x \in \bbx$ and $y \in \bby$ for a topological space $\bbx$ and a
compact space $\bby$, then $\Psi$ is a closed mapping from the product topology.
\end{proposition}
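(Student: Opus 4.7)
The plan is to prove the contrapositive at the level of complements: I will show that for any closed set $C \subseteq \bbx \times \bby$, the complement $\bbx \setminus \Psi(C)$ is open in $\bbx$. Equivalently, I will show that every $x_0 \in \bbx$ with $(\{x_0\} \times \bby) \cap C = \emptyset$ admits an open neighborhood $U \ni x_0$ with $(U \times \bby) \cap C = \emptyset$. This is precisely the ``tube lemma''-style argument from general topology, and compactness of $\bby$ is what makes it work.

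First I fix an arbitrary $x_0 \in \bbx \setminus \Psi(C)$, so that $\{x_0\} \times \bby \subseteq (\bbx \times \bby) \setminus C$. For each $y \in \bby$, the point $(x_0,y)$ lies in the open complement of $C$, and by the definition of the product topology there exist open sets $U_y \subseteq \bbx$ and $V_y \subseteq \bby$ with $x_0 \in U_y$, $y \in V_y$, and
\[ U_y \times V_y \;\subseteq\; (\bbx\times\bby)\setminus C . \]

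Next I use compactness. The family $\{V_y\}_{y\in\bby}$ is an open cover of $\bby$, so compactness yields a finite subcover $V_{y_1},\dots,V_{y_n}$. I then set
\[ U \;:=\; \bigcap_{i=1}^n U_{y_i}, \]
which is a finite intersection of open sets containing $x_0$, hence an open neighborhood of $x_0$. For any $x \in U$ and any $y \in \bby$, pick $i$ with $y \in V_{y_i}$; then $(x,y) \in U_{y_i}\times V_{y_i} \subseteq (\bbx\times\bby)\setminus C$, so $(x,y)\notin C$. This shows $U \cap \Psi(C) = \emptyset$, as required. Since $x_0$ was arbitrary, $\bbx\setminus\Psi(C)$ is open and $\Psi(C)$ is closed.

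No step is really an obstacle here; the only subtle point is selecting the neighborhoods $U_y,V_y$ from the product topology and then taking the \emph{finite} intersection of the $U_y$'s -- an infinite intersection would generally fail to be open, which is exactly why compactness of $\bby$ is indispensable. No assumptions on $\bbx$ beyond it being a topological space are needed.
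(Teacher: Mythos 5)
Your proof is correct and follows the same route as the paper: reduce closedness of $\Psi(C)$ to finding, for each $x_0\notin\Psi(C)$, a tube $U\times\bby$ missing $C$. The only difference is that the paper simply invokes the tube lemma, whereas you prove it inline via the finite subcover of $\{V_y\}$; both are fine.
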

\begin{proof}
Let $A \subseteq \bbx \times \bby$ closed in the product topology and let $x \not\in \Psi[A]$, i.e.\ $\{x\} \times \bby\subseteq A^c$. By
the tube lemma, there exists an open neighborhood $N_x$ of $x$ so that $N_x \times \bby \subseteq A^c$.  This implies $N_x \subseteq
\Psi[A]^c$.  In particular, this is true for every $x \not\in \Psi[A]$, so $\Psi[A]^c$ is open, and thus the result is proven.
\end{proof}

\end{document}